\documentclass[conference]{IEEEtran}
\IEEEoverridecommandlockouts
\usepackage{amsmath,amssymb,amsfonts}
\usepackage{algorithmic}
\usepackage{graphicx}
\usepackage{textcomp}
\usepackage{xcolor}
\usepackage{multirow}
\usepackage{graphicx}
\usepackage{amsmath}

\usepackage{amsthm}
\usepackage[backend=biber,style=ieee]{biblatex}
\usepackage{amssymb}
\usepackage{algorithmic}
\usepackage{array}
\usepackage{stfloats}
\usepackage{stmaryrd}
\usepackage{float}
\usepackage{subcaption}
\usepackage{tikz}
\usepackage[normalem]{ulem}
\usetikzlibrary{shapes.gates.logic.US,trees,positioning,arrows}
\usetikzlibrary{calc}
\usepackage{wrapfig}
\usepackage{cutwin}
\usepackage{pgfplots}
\usepgfplotslibrary{statistics}
\usepackage{enumitem}
\usepackage{filecontents}
\usepackage{thmtools}
\usepackage{url}
\usepackage{thm-restate}
\usepackage{hyperref}
\newcommand\recht\operatorname
\newcommand{\BB}{\mathbb{B}}
\newcommand{\type}{\gamma}
\newcommand{\BAS}{\mathrm{BAS}}
\newcommand{\OR}{\mathrm{OR}}
\newcommand{\AND}{\mathrm{AND}}
\newcommand{\R}[1]{\recht{R}_{#1}}
\newcommand{\ch}{\recht{Ch}}
\newcommand{\bx}{\mathbf{x}}

\newcommand{\by}{\mathbf{y}}
\newcommand{\FA}{\mathcal{A}}
\newcommand{\cc}{\recht{c}}
\newcommand{\hc}{\hat{\cc}}
\newcommand{\dd}{\recht{d}}
\newcommand{\hd}{\hat{\dd}}
\newcommand{\umin}{\underline{\min}}
\newcommand{\Rnn}{\mathbb{R}_{\geq 0}}
\newcommand{\DD}{\Rnn^2}
\newcommand{\poD}{\sqsubseteq}
\newcommand{\spoD}{\sqsubset}
\newcommand{\E}{\binom{\hc}{\hd}}
\newcommand{\PF}{\recht{PF}(T)}
\newcommand{\dopt}{d_{\recht{opt}}}
\newcommand{\copt}{c_{\recht{opt}}}
\newcommand{\DB}{\mathtt{DTrip}}
\newcommand{\C}[1]{\mathcal{C}_{#1}^{\recht{D}}}
\newcommand{\CU}{\C{U}}
\newcommand{\m}[1]{\umin_{#1}}
\newcommand{\mU}{\m{U}}
\newcommand{\Pow}{\mathcal{P}}
\newcommand{\pp}{\recht{p}}
\newcommand{\Yx}{Y_{\bx}}
\newcommand{\hdE}{\hd_{\recht{E}}}
\newcommand{\Ex}{\mathbb{E}}
\newcommand{\EE}{\binom{\hc}{\hdE}}
\newcommand{\Prob}{\mathbb{P}}
\newcommand{\kk}{\recht{PS}}
\newcommand{\DP}{\mathtt{PTrip}}
\newcommand{\PFE}{\recht{EPF}(T)}

\newcommand{\CpU}{\mathcal{C}_{U}^{\recht{P}}}

\newcommand{\tDF}[1]{\widetilde{\mathtt{FTrip}}_v}
\newcommand{\DF}[1]{\mathtt{FTrip}_v}

\newcommand\rnd\mathsf
\newcommand{\struc}{\recht{S}}
\newcommand{\vvec}[1]{\left(\begin{smallmatrix}#1\end{smallmatrix}\right)}

\newcommand{\hlbox}[1]{%
  \smallskip\begin{center}
  \fboxrule1pt\fboxsep3pt\fcolorbox{black!45}{black!8}{%
  \begin{minipage}{.96\linewidth}#1\end{minipage}}
  \end{center}\smallskip}

\allowdisplaybreaks
\setlength{\textfloatsep}{0pt}

\newtheorem{definition}{Definition}
\newtheorem{example}{Example}
\newtheorem{theorem}{Theorem}
\newtheorem{lemma}{Lemma}
\newtheorem*{myproblem}{Problem}

\addbibresource{main.bib}

\def\BibTeX{{\rm B\kern-.05em{\sc i\kern-.025em b}\kern-.08em
    T\kern-.1667em\lower.7ex\hbox{E}\kern-.125emX}}
\begin{document}

\title{Cost-damage analysis of attack trees\thanks{This research has been partially funded   by ERC Consolidator grant 864075 CAESAR and the European Union’s Horizon 2020 research and innovation programme under the Marie Skłodowska-Curie grant agreement No. 101008233.}}
\author{\IEEEauthorblockN{Milan Lopuhaä-Zwakenberg}
\IEEEauthorblockA{\textit{University of Twente}\\
m.a.lopuhaa@utwente.nl}
\and
\IEEEauthorblockN{Mariëlle Stoelinga}
\IEEEauthorblockA{\textit{University of Twente} \& \textit{Radboud University}\\
m.i.a.stoelinga@utwente.nl}
}

\maketitle

\thispagestyle{plain}
\pagestyle{plain}

\begin{abstract}
Attack trees (ATs) are a widely deployed modelling technique to categorize potential attacks on a system. An attacker of such a system aims at doing as much damage as possible, but might be limited by a cost budget. The maximum possible damage for a given cost budget is an important security metric of a system. In this paper, we find 
the maximum damage given a cost budget by modelling this problem with ATs, both in deterministic and probabilistic settings. We show that the general problem is NP-complete, and provide heuristics to solve it. For general ATs these are based on integer linear programming.  However when the AT is tree-structured, then one can instead use a faster bottom-up approach. We also extend these methods to other problems related to the cost-damage tradeoff, such as the cost-damage Pareto front.
\end{abstract}

\begin{IEEEkeywords}
Attack trees, Pareto front, cost-damage analysis, integer linear programming
\end{IEEEkeywords}

\section{Introduction}

\begin{wrapfigure}[26]{r}{4cm}
\centering
\includegraphics[width=4cm]{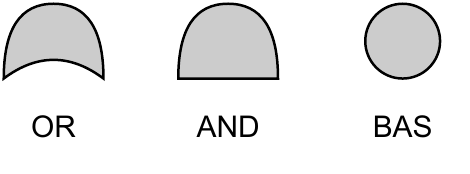}

\includegraphics[width=3.5cm]{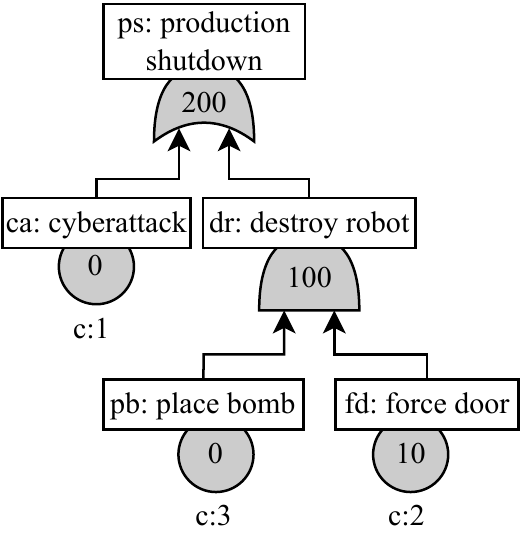}

\caption{Attack tree for a factory. Production can be stopped by a cyberattack or by destroying the production robot, for which an attacker forces their way inside and places a bomb. Damage values (in 1000 USD) are inscribed in the nodes, and cost values are below the BASs.
}
\label{fig:bank1}
\end{wrapfigure}

\noindent \textbf{Attack trees.} Attack trees (ATs) are a prominent methodology in security analysis. They aid
security specialists in identifying, analyzing and prioritizing
 (cyber)risks.
ATs are included in several popular system engineering frameworks, e.g., \emph{UMLsec} \cite{RA15} and \emph{SysMLsec} \cite{AR13}, and are supported by industrial tools such as Isograph's \emph{AttackTree} \cite{IsographAT}. ATs have been used in many scenarios, such as military information infrastructure \cite{Surdu2003}, electronic voting \cite{Yasinisac2011}, and IoT insider threats \cite{Kammueller2016}. Their popularity is owed to their simplicity, which allows for a range of applications, and their analyzability.

An AT is an hierarchical diagram that describes a system's vulnerabilities to attacks.  Despite the name, an AT is a rooted directed acyclic graph (DAG). Its root represents the adversary's goal, while leaves represent basic attack steps (BASs) undertaken by the adversary. Other nodes represent intermediate attack goals and are labeled with an OR-gate or AND-gate, determining how its activation depends on that of its children. An example is given in Fig.~\ref{fig:bank1}.

\noindent \textbf{Quantitative analysis.} Besides describing possible attacks on a system, ATs can also be used to analyze quantitative information about such attacks. Many \emph{attack metrics} exist, such as the damage, required cost, or required skill of an attack. 
Such metrics are key performance indicators that formalize a system's security performance.

These metrics do not exist in isolation, and their interplay is  important for quantitative security analysis. For instance, one attack may be cheaper than another, but require more time, or a more skilled attacker. Therefore, it is essential to understand the tradeoff between different security metrics. To understand and quantify such tradeoffs, one considers the \emph{Pareto front} of multiple metrics \cite{fila2019efficient}, which includes all attacks that are not dominated by another attack in all metrics. For instance, in Fig.~\ref{fig:bank1} the attack $\{\mathtt{ca}\}$ does damage 200 for cost 1, which is preferable over $\{\mathtt{fd}\}$ which does 10 damage for cost 2.

\noindent \textbf{Cost-damage analysis.} In this paper we consider the interplay between two important attack metrics: \emph{attack cost} \cite{AGKS15}, describing an attacker's budget in, e.g., money or time;  and \emph{attack damage} \cite{saini2008threat}, representing the damage done to the system, e.g., in terms of monetary value. The larger the cost budget available to an attacker, the more damaging an attack can be. While damage is the most relevant metric to the system owner, knowing the cost of an attack helps them understand the likelihood of such an attack. This fits within the perspective that likelihood and impact both play an important role in risk analysis \cite{hopkin2018fundamentals}.
For a comprehensive risk assessment of a system's security, it is therefore paramount to solve the following problems: 

\hlbox{{\bf Problem statement.}
Given an attack tree $T$, solve the following problems:
\begin{enumerate}[topsep=0pt,leftmargin=32pt]
    \item[DgC)] Find the most \textbf{D}amaging attack \textbf{g}iven a \textbf{C}ost budget.
    \item[CgD)] Find the \textbf{C}heapest attack \textbf{g}iven a \textbf{D}amage threshold.
    \item[CDPF)]  Find the \textbf{C}ost-\textbf{D}amage \textbf{P}areto \textbf{F}ront.
\end{enumerate}
}

Existing approaches to calculating the Pareto front of multiple AT metrics \cite{kumar2015quantitative,fila2019efficient,BS21} cannot be applied to cost-damage problems for two reasons: First, existing methods assume that only BASs are assigned metric values. For damage, this assumption is not realistic, as the internal nodes often represent disabled subsystems, which also have an associated damage value. For instance, in Fig. \ref{fig:bank1}, the attack $\{\mathtt{ca}\}$ and $\{\mathtt{pb},\mathtt{fd}\}$ both shut down production, but the latter does so by destroying the production robot, leading to greater monetary loss.
Second, existing methods only consider \emph{successful attacks}, i.e., attacks that activate the top node of the AT. In the case of cost-damage analysis, however, attacks not reaching the top node can still do quite some damage on intermediate nodes, and should be considered in the analysis. For instance, an attacker can try to rob an ATM by forcing it with explosives. Even if the attacker fails in stealing the money, the explosives still cause significant damage to the ATM owner. Thus existing work cannot solve cost-damage problems in the generality required to model realistic scenarios.
For these reasons new approaches and algorithms need to be developed.

\smallskip
\noindent \textbf{Approach.} This paper 
introduces 
three novel methods to solve the problems stated above. We first consider a deterministic setting, where BASs always succeed. We then consider a probabilistic setting, where BASs may fail with a given probability. 

\emph{NP-completeness:} 
We first prove two important negative results, showing that even the simplest cost-damage problems do not have `easy' solutions. Cost-damage problems are similar to binary knapsack problems \cite{dudzinski_exact_1987}; we use this to prove that even the simplest type of cost-damage analysis is NP-complete.
Unfortunately, this similarity cannot be exploited to apply heuristics for knapsack problems or their many extensions \cite{Gallo1980,forrester_strengthening_2022,SVIRIDENKO200441} to cost-damage problems:  All 
extensions assume properties of the damage function 
(i.e., the function assigning a damage value to each attack)
that are not met in our setting.
In fact, we prove that the damage function can be any nondecreasing function. This highlights the need for the completely new methods for cost-damage analysis in ATs.

As common, our algorithms distinguish between tree- and DAG-shaped attack trees. Further, we
consider deterministic versus probabilistic failure behaviour in the leaves. 

\emph{Bottom-up algorithm for treelike ATs:} 
Existing approaches to the Pareto front of two metrics work bottom-up, discarding non-optimal attacks at every node \cite{fila2019efficient}. This does not work for damage, as intermediate nodes also carry damage values. Hence attacks that are non-optimal at a certain node may do more damage at a higher node, becoming optimal there.

To solve problem CDPF above, we describe a new bottom-up method for finding the Pareto front in both the deterministic and probabilistic setting.
The key insight is to perform
a bottom-up Pareto analysis
in an {\em extended cost-damage domain}, by adding a dimension
for the current top node's activation (or activation probability in the probabilistic setting); this dimension signifies an attack's `potential' to do more damage at higher nodes.
As shown in our experiments, these bottom-up methods drastically reduce computation time from multiple hours to less than 0.1 second.

For the single-objective problems DgC and CgD we cannot do a `simpler' bottom-up approach in which only the optimal attack is propagated, as one needs the overview of the full AT to decide which attack is optimal. Instead, we still need to propagate (part of) the Pareto front, and we gain our solution for DgC and CgD from minor adaptations to the CDPF approach.

\emph{Integer linear programming for DAG-like ATs:} 
It is well-known \cite{Kordy2018,BS21} that 
bottom-up algorithms do not work for DAG-like ATs:  since nodes may have multiple parents,  their cost/damage being counted twice. We introduce a novel method for the deterministic setting {by translating cost-damage problems into the \emph{bi-objective integer linear programming} (BILP) framework \cite{ozlen2009multi}; we can then apply existing BILP solvers to solve them \cite{gurobi}}. This {translation} is nontrivial, as damage is a nonlinear function of the adversary's attack, as we will show in Section \ref{sec:knapsack}.
The key insights behind our algorithm are that (1) damage is linear in terms of the \emph{structure function} that describes which AT nodes are reached by an attack and (2) the constraints defining the structure function can be phrased as linear constraints.

We use existing biobjective methods and solvers to solve CDPF \cite{stidsen2014branch}, and single-objective solvers to solve DgC and CgD \cite{yalmip}. This does not extend to the probabilistic setting, where equations become nonlinear; we leave the analysis of probabilistic DAG-like ATs as an open problem. 

Finally, in experiments we show our methods can be used for risk analysis by applying them to two systems: a wireless sensor {device} tracking wildlife in a giant panda reservation, and a data server in a network behind a firewall. The ATs of these systems are taken from the literature \cite{jiang2012attack,dewri2012optimal}. We use the cost-damage Pareto front to assess the vulnerabilities of these systems. Furthermore, we also measure the computing time {in the case studies and on 500 random ATs: both bottom-up and BILP methods vastly outperform the existing enumerative approach.}
This shows that our methods present an enormous speedup compared to the status quo.

\begin{table}
\centering
\begin{tabular}{c|cc}
     & Tree & DAG \\
     \hline
Deterministic     & bottom-up (Theorem \ref{thm:CDPF-tree}) & BILP (Theorem \ref{thm:bilp}) \\
Probabilitistic & bottom-up (Theorem \ref{thm:CEDPF-tree}) & \emph{open problem}\\
\end{tabular}
\caption{Overview of this paper's algorithmic contributions.}
\end{table}

\hlbox{{\bf Contributions.} Summarized, our contributions are: 
    \begin{enumerate}[topsep=0pt]
    \item A formal definition of cost-damage problems in ATs;
    \item A proof that these problems are NP-complete (Sec.~\ref{sec:knapsack});
    \item A proof that cost-damage problems cannot be reduced to common extensions of the binary knapsack problem; (Sec.~\ref{sec:knapsack});
    \item A bottom-up method to solve the deterministic and probabilistic cost-damage problems for treelike attack trees (Sec.~\ref{sec:treedet} \& \ref{sec:treeprob});
    \item An integer linear programming-based method to solve the deterministic cost-damage problems for DAG-like attack trees (Sec.~\ref{sec:dagdet}).
    \item An experimental evaluation of the above methods on two realistic cases from the literature (Sec.~\ref{sec:exp}).
\end{enumerate}
}

The Matlab code for the experiments can be found at \cite{code}.

\section{Related work} \label{sec:rel}

In the literature, there are multiple approaches to decorating an AT with cost and damage values. Existing work concerning damage (also called \emph{impact}) on ATs can be divided into three categories: works in which only BASs have a damage attribute {\cite{saini2008threat,bobbio2013methodology,10.1007/978-3-319-11599-3_12,kumar2015quantitative}}, works in which only the root node has a damage attribute \cite{jurgenson2008computing}, and works in which every node can have a damage attribute \cite{ingoldsby2005fundamentals}. In the same manner, in some works intermediate nodes are allowed to have an associated cost \cite{andre2021parametric,kumar2015quantitative}, while in others only BASs have costs {\cite{mauw2005foundations,kumar2015quantitative,BS21}.}
In this paper, every node has a damage attribute, while only BASs have a cost attribute. We choose this because it is the simplest model for the most expressivity; as we will show in Section \ref{sec:cd}, cost values on internal nodes can be modeled by adding dummy BASs, but damage values cannot.

{Most of the work listed above only considers one metric at a time. For instance, in \cite{bobbio2013methodology} binary decision diagrams (BDDs) are used to calculate both the minimal cost of a succesful attack and the maximal damage, but the tradeoff between the two metrics is not investigated. Other methods for calculating single metrics include bottom-up methods for treelike ATs \cite{BS21} and priced-timed automata \cite{andre2021parametric}. Of the works that consider cost-damage tradeoffs, some focuse on modeling rather than algorithms \cite{ingoldsby2005fundamentals,saini2008threat}. One approach to the Pareto front is via priced-timed automata \cite{kumar2015quantitative}; however, we cannot directly apply this to our setting as in that work only BASs have a damage attribute. In \cite{jurgenson2008computing}, cost and damage are used to define a single attack parameter \emph{outcome}, which is optimized heuristically.

Other works on ATs consider the Pareto front between two generic metrics. A bottom-up method for calculating Pareto fronts for treelike ATs, and under some additional assumption for DAG-like ATs, is given in \cite{fila2019efficient}. Furthermore, a BDD-based approach for DAG-like ATs is developed in \cite{BS21}. However, damage does not satisfy the conditions for either of these two approaches, and these cannot be used for our CgD, DgC and CDPF problems. Overall, we can conclude that none of the existing literature is able to solve cost-damage problem in the general model discussed in this paper.
}


Another approach to multi-objective optimization is to approximate the Pareto front, for example using genetic algorithms \cite{deb2002fast,ali2020quality}. {This has also been applied to ATs with cost \cite{10.1007/978-3-319-11599-3_12}.} While such an approach would be interesting for cost-damage ATs, in this paper we instead focus on provably optimal solutions, corresponding to provable security guarantees.

\section{Preliminaries} \label{sec:prelim}

\begin{table}[t]
\centering
\begin{tabular}{ccc}
Notation & Explanation & page  \\ \hline
$\BB$ & \{0,1\} & \pageref{page:b} \\
$T = (N,E)$ & Attack Tree & \pageref{def:at} \\
$B$ & BASs of $T$ & \pageref{page:bas} \\
$\gamma(v)$ & Type of node $v$ & \pageref{def:at} \\
$\ch(v)$ & Children of node $v$ & \pageref{page:ch} \\ 
$(\FA,\preceq)$ & Poset of attacks & \pageref{def:att} \\
$\struc(\bx,v)$ & Structure function of $T$ & \pageref{def:sf} \\
$\cc(v)$ & Cost of BAS $v$ & \pageref{def:cd} \\
$\dd(v)$ & Damage of node $v$ & \pageref{def:cd} \\
$\hc(\bx)$ & Cost of attack $\bx$ & \pageref{def:cd}\\
$\hd(\bx)$ & Damage of attack $\bx$ & \pageref{def:cd}\\
$\umin{} X$ & Set of minima of $X$ & \pageref{page:umin} \\
$(\DD,\sqsubseteq)$ & Poset of attribute pairs & \pageref{page:pair} \\
$\E$ & Attribution map & \pageref{page:attr} \\
$\PF$& Pareto-front of $T$ & \pageref{page:CDPF} \\
CDPF & Cost-damage Pareto front & \pageref{page:CDPF} \\
DgC & Maximal damage given cost & \pageref{page:DgC}\\
CgD & Minimal cost given damage & \pageref{page:CgD}\\
$(\DB,\sqsubseteq)$ & Deterministic attribute triples & \pageref{page:dtrip}\\
$\mU$ & Cost-restricted $\umin$ & \pageref{page:minu1},\pageref{page:umin2}\\
$\CU(v)$ & Incomplete deterministic PF at $v$ & \pageref{page:cu}\\
$\pp(v)$ & Probability of BAS $v$ & \pageref{def:cdp}\\
$Y_{\bx}$ & Actualized attack & \pageref{def:de}\\
$\hdE(\bx)$ & Expected damage of attack $\bx$ & \pageref{def:de} \\
CEDPF & Cost-expected damage Pareto front & \pageref{prob:cedpf}\\
EDgC & expected damage given cost & \pageref{prob:edgc}\\
CgED & cost given expected damage & \pageref{prob:cged}\\
$\kk(\bx,v)$ & Probabilistic structure function & \pageref{page:ps}\\
$(\DP,\sqsubseteq)$ & Probabilistic attribute triples & \pageref{eq:ptrip}\\
$\CpU(v)$ & Incomplete probabilistic PF at $v$ & \pageref{page:cpu}
\end{tabular}
\caption{Notation used in this paper.}
\label{tab:nota}
\end{table}

Let $\BB$ \label{page:b} be the set $\{0,1\}$, with logical operators ${\wedge},{\vee}$.

\begin{definition} \label{def:at}
An \emph{attack tree} is a rooted directed acyclic graph $T = (N,E)$ where each node $v \in N$ has a type $\type(v) \in \{\BAS,\OR,\AND\}$, such that $\type(v) = \BAS$ if and only if $v$ is a leaf.
\end{definition}

Contrary to terminology an AT is not necessarily a tree. When the DAG $T$ is actually a tree, it is called \emph{treelike}; the general case is referred to as \emph{DAG-like}. The root of $T$ is denoted $\R{T}$. For a node $v$ we denote its set of children by $\ch(v) = \{w \mid (v,w) \in E\}$; \label{page:ch} we also say that $v$ is an \emph{ancestor} of $w$, and $w$ a \emph{descendant} of $v$, if there is a path $v \rightarrow w$ in $T$.
When $\ch(v) = \{v_1,\ldots,v_n\}$, we write $v = \OR(v_1,\ldots,v_n)$ or $v = \AND(v_1,\ldots,v_n)$ depending on $\type(v)$. The set of BASs a.k.a. leaves is denoted by $B$. \label{page:bas}
For instance, in the AT $T$ from Fig. \ref{fig:bank1} one has $B = \{\mathtt{ca}, \mathtt{pb},\mathtt{fd}\}$, $\mathtt{dr} = \AND(\mathtt{pb},\mathtt{fd})$, and $\R{T} = \mathtt{ps} = \OR(\mathtt{ca},\mathtt{dr})$. 
Note that $T$ is treelike.

An attacker performs an attack by activating a chosen set of BASs, represented by a \emph{status vector} $\mathbf{x} \in \BB^B$; the status $x_v$ of a BAS $v$ equals 1 if $v$ is activated, and $0$ if it is not. Such a status vector can also be regarded as a subset of $B$. Transposing the partial order $\subseteq$ to status vectors yields a partial order $\preceq$.

\begin{definition} \label{def:att}
An \emph{attack} on $T$ is a vector $\bx \in \BB^B$; we let $\FA = \BB^B$ be the set of all attacks. This has a partial order $\preceq$ given by $\bx \preceq \by$ iff $x_v \leq y_v$ for all $v \in B$.
\end{definition}

An attack propagates upwards from the BASs. A node is reached by an attack depending on its type $\OR$ or $\AND$, and whether any/all of its children are reached by the attack. This idea is formalized by the structure function $\struc$.
Given an attack vector $\bx$, 
and a node $v$,
$\struc(\bx,v)$
indicates whether 
 $v$ is reached by  $\bx$, i.e., if $\struc(\bx,v) = 1$.

\begin{definition} \label{def:sf}
The \emph{structure function} $\struc\colon \FA \times N \rightarrow \BB$ of $T$ is defined recursively:
\begin{equation*}
\struc(\bx,v) = \begin{cases}
x_v & \textrm{ if $\type(v) = \BAS$},\\
\bigvee_{v' \in \ch(v)} \struc(\bx,v') & \textrm{ if $\type(v) = \OR$},\\
\bigwedge_{v' \in \ch(v)} \struc(\bx,v') & \textrm{ if $\type(v) = \AND$}.
\end{cases}
\end{equation*}
\end{definition}

\section{Deterministic cost-damage problems for ATs} \label{sec:cd}

In this section we formulate this paper's problem; solutions are presented  in Sections \ref{sec:treedet} and \ref{sec:dagdet}. This section deals with a deterministic setting, where a BAS's success is guaranteed; its probabilistic equivalent is presented in Section \ref{sec:prob}.

 The attacker's goal is to disrupt the system as much as possible, which is measured by a \emph{damage} value representing financial cost, downtime, etc. Each node $v$ has a damage value $\dd(v)$, and an attack's total damage $\hd(\bx)$ is the sum of the damage value of all nodes reached by $\bx$. 
 At the same time, an attacker may have only limited resources. Each BAS $v$ has a \emph{cost} value $\cc(v)$ representing e.g. the money, time or resources the attacker has to spend to activate it. The total cost $\hc(\bx)$ of an attack is the sum of the costs of the activated BASs.

\begin{definition} \label{def:cd}
A \emph{cd-AT} is a triple $(T,\cc,\dd)$ of an AT $T$ and maps $\cc\colon B\rightarrow \mathbb{R}_{\geq 0}$ and $\dd\colon N \rightarrow \mathbb{R}_{\geq 0}$. 
Define the total cost and damage functions $\hc,\hd\colon \FA \rightarrow \Rnn$ by
\begin{align*}
\hc(\bx) &= \sum_{v \in B} x_v\cc(v), & \hd(\bx) &= \sum_{v \in N} \struc(\bx,v)\dd(v).
\end{align*}
\end{definition}

As opposed to other works in quantitative analysis on ATs \cite{fila2019efficient,BS21}, we do not only consider so-called \emph{{successful}} attacks, i.e., $\bx$ for which $\struc(\bx,\R{T}) = 1$. The reason is that in our model damage can be done at any level, not just at the top node. It is therefore important to know the damaging capabilities of an attacker, even when that attacker's limited resources mean that they cannot damage the top node. {Furthermore, an attacker may try different avenues towards success, and while a given path may be discarded without reaching the top node, side effects may remain. We therefore assign damage values not only to the top node, but also to internal nodes.}

\begin{example} \label{ex:bank2}
Consider the AT $T$ from Fig. \ref{fig:bank1}, repeated below, and its cost and damage functions. Then the functions $\hc$ and $\hd$ are calculated as in the following table.
\begin{center}
\begin{tabular}{c|cccccccc}
$x_{\mathtt{ca}}$ & 0 & 0 & 0 & 0 & 1 & 1 & 1 & 1 \\
$x_{\mathtt{pb}}$ & 0 & 0 & 1 & 1 & 0 & 0 & 1 & 1 \\
$x_{\mathtt{fd}}$ & 0 & 1 & 0 & 1 & 0 & 1 & 0 & 1 \\
\hline
$\hc(\bx)$ & 0 & 2 & 3 & 5 & 1 & 3 & 4 & 6 \\
$\hd(\bx)$ & 0 & 10 & 0 & 310 & 200 & 210 & 200 & 310\\
\end{tabular}
\end{center}
\end{example}

\begin{wrapfigure}[9]{r}{4cm}
\centering
\vspace{-1em}
\includegraphics[width=3.5cm]{Royalty_cd.drawio.pdf}
\end{wrapfigure}

Some works also assign cost values to internal nodes \cite{andre2021parametric,kumar2015quantitative}, the interpretation being that an internal node is only activated if enough of its children are activated and its cost is paid. However, this can be simulated by adding a dummy BAS which holds the associated cost, as in Fig. \ref{fig:dummy}. However, the same cannot be done for damage: moving the damage to the dummy BAS leads to a situation where \emph{only} the dummy needs to be activated to do the damage. For full expressivity we thus allow internal nodes to have damage values, but not cost values. 

\begin{figure}
\centering
\begin{subfigure}{0.25\linewidth}
\centering
\includegraphics[width=2.2cm]{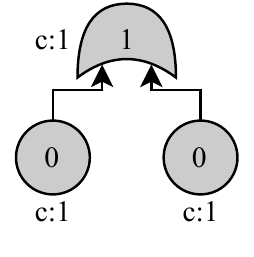}
\end{subfigure}
\begin{subfigure}{0.35\linewidth}
\centering
\includegraphics[width=3cm]{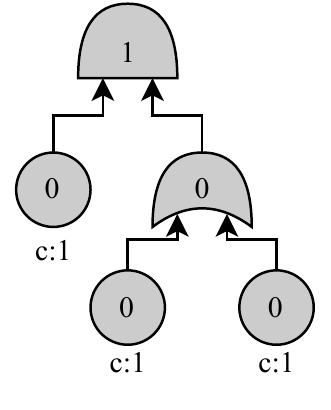}
\end{subfigure}
\begin{subfigure}{0.35\linewidth}
\centering
\includegraphics[width=3cm]{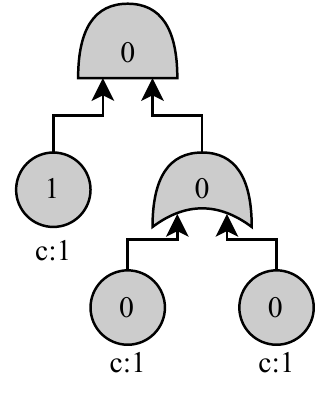}
\end{subfigure}
\caption{An example showing that damage values on internal nodes are necessary, but cost values on internal nodes are not. The cost value on the internal node in the AT on the left is replaced by a dummy BAS in the middle AT, which is equivalent: both ATs require cost 2 to perform 1 damage. In the right AT, the damage is also moved to the dummy BAS, but the result is not equivalent: 1 cost already yields 1 damage.} \label{fig:dummy}
\end{figure}

\subsection{Cost damage problems}

\setlength{\textfloatsep}{10pt}

In ATs, there is a tradeoff between resource utilization and damage: the higher the cost budget an attacker has at their disposal, the more damage they may cause. This tradeoff can be analyzed via the \emph{Pareto front}: the cost and damage values of all attacks that are not dominated by other attacks, where $\bx$ dominates $\by$ if $\bx$ is cheaper than $\by$ while doing more damage. An attack $\bx$ in the Pareto front is called \emph{Pareto optimal}, and it is the most damaging attack if the attacker cannot exceed cost $\hc(\bx)$. Thus the Pareto front gives a full overview of the system's vulnerability to any attacker.

For a general poset $(X,\preceq)$, we define its set of minimal elements as \label{page:umin}
\begin{equation*} 
\umin_{\preceq} \ X = \{x \in X \mid \forall x' \in X. x' \not \prec x\}.
\end{equation*}
We drop the subscript $\preceq$ if it is clear from the context.
We consider the domain of \emph{attribute pairs}, i.e., the set $\DD$ with a partial order $\poD$ given by $(a,a') \poD (b,b')$ if and only if $a \leq b \textrm{ and } a' \geq b'$.\label{page:pair}
For a cd-AT $(T,\cc,\dd)$, we define the evaluation map $\E\colon \FA \rightarrow \DD$ by $
\E(\bx) = \binom{\hc(x)}{\hd(x)}$\label{page:attr} (we represent elements of $\DD$ as column vectors). Note that $\bx$ dominates $\by$ if and only if $\E(\bx) \spoD \E(\by)$.

The aim of this paper is to find the cost-damage Pareto front, as well as two related single-objective problems. Mathematically, these are formulated as follows:

\hlbox{
\textbf{Problems.}
Given a cd-AT $(T,\cc,\dd)$, solve the following problems:
\begin{enumerate}[topsep=0pt,leftmargin=30pt]
    \item[\textbf{CDPF}] Cost-damage Pareto front: find $\umin_{\poD} \E(\FA) \subseteq \DD$.\label{page:CDPF}
    \item[\textbf{DgC}] Maximal damage given cost constraint: Given $U \in \mathbb{R}_{\geq 0}$, find $\dopt = \max_{\bx\colon \hc(\bx) \leq U} \hd(\bx)$.\label{page:DgC}
    \item[\textbf{CgD}]  Minimal cost given damage constraint: $L \in \mathbb{R}_{\geq 0}$, find $\copt = \min_{\bx\colon \hd(\bx) \geq L} \hc(\bx)$.\label{page:CgD}
\end{enumerate}
}

From CDPF one can {solve} DgC and CgD via
\begin{align}
\dopt &= \max \{d \in \Rnn \mid \exists c \in [0,U]. \vvec{c\\d} \in \PF\},\label{eq:DgC}\\
\copt &= \min \{c \in \Rnn \mid \exists d \in \mathbb{R}_{\geq L}. \vvec{c\\d} \in \PF\}. \label{eq:CgD}
\end{align}

{These problems are relevant in security analysis: DgC can be used to determine the damaging capabilities of different attacker profiles \cite{10.1007/978-3-319-11599-3_12,kumar2015quantitative}.
CDPF can be used to give an overview over all attacker profiles.
For a security operations center monitoring a network, a cost-damage analysis (with cost measured in time) provides insight in whether the response time is sufficient to stop damaging attacks.}

\begin{example} \label{ex:bank3}
In Example \ref{ex:bank2}, $\E(\FA)$ is given by the lower two rows of the table.
A number of these attacks are not Pareto optimal: we have $\vvec{1\\200} \sqsubset \vvec{2\\10},\vvec{3\\0},\vvec{4\\200}$, and furthermore $\vvec{5\\310} \sqsubset \vvec{6\\310}$. It follows that (see Fig. \ref{fig:PF}):
 \begin{equation}
 \PF = \left\{\vvec{0\\0},\vvec{1\\200},\vvec{3\\210},\vvec{5\\310}\right\}.\label{eq:bankpf}
 \end{equation}
 From this we find, for instance that the solution to DgC for $U=2$ is given by $d_{\recht{opt}} = 200$.
\end{example}

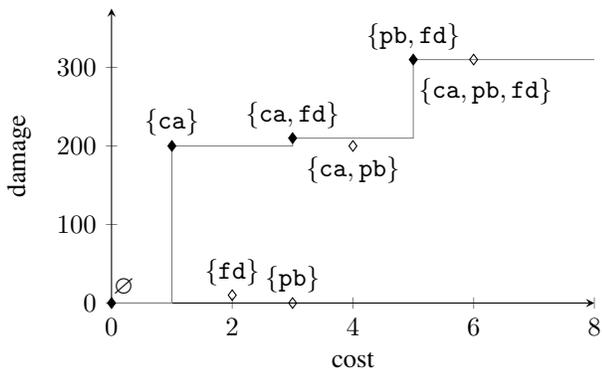
\begin{figure}[t]
\centering
\begin{tikzpicture}
\begin{axis}[width=8cm,height=5.5cm,axis lines = left,xmin = 0,xmax = 8, ymin = 0, ymax = 375,xlabel = {cost},ylabel={damage}]
\addplot[color=black!50, mark = none] table[col sep = comma] {CD_royalty.csv};
\addplot[color=black, mark = diamond*, only marks] table[col sep = comma] {CD_royalty_light.csv};
\addplot[color=black, mark = diamond, only marks] table[col sep = comma] {CD_royalty_no.csv};
\node at (axis cs: 0.2,20) {$\varnothing$};
\node at (axis cs: 2,40) {$\{\mathtt{fd}\}$};
\node at (axis cs: 3,30) {$\{\mathtt{pb}\}$};
\node at (axis cs: 1,230) {$\{\mathtt{ca}\}$};
\node at (axis cs: 3,240) {$\{\mathtt{ca},\mathtt{fd}\}$};
\node at (axis cs: 4,170) {$\{\mathtt{ca},\mathtt{pb}$\}};
\node at (axis cs: 5,340) {$\{\mathtt{pb},\mathtt{fd}\}$};
\node at (axis cs: 6.2,270) {$\{\mathtt{ca},\mathtt{pb},\mathtt{fd}\}$};
\end{axis}
\end{tikzpicture}
\caption{CDPF for Examples \ref{ex:bank2} and \ref{ex:bank3}. Filled nodes are Pareto-optimal attacks.
} \label{fig:PF}
\end{figure}

In what follows, we present novel methods to solve CDPF, DgC and CgD. As in many problems related to calculating AT metrics, an important factor in the complexity of solutions is whether the AT is treelike or not \cite{BS21}. We introduce a bottom-up method for treelike ATs in Section \ref{sec:treedet}, and a method based on integer linear programming for DAG-like ATs in Section \ref{sec:dagdet}.

\section{Relation to knapsack problems and NP-completeness} \label{sec:knapsack}

In this section, we prove two important negative results, based on the similarity of cost-damage problems to binary knapsack problems. First, we show that even the simplest cost-damage problem is NP-complete. Second, we show that cost-damage problems are considerably more general than (extended) knapsack problems, which means that existing heuristics for knapsack problems cannot be applied to our situation. Both results emphasize the importance of finding new heuristics for cost-damage problems.

DgC is a generalisation of the binary knapsack problem \cite{dudzinski_exact_1987}, which is
\begin{align*}
\mathrm{minimize}_{\mathbf{x} \in \BB^n } \ \ f(\mathbf{x}) \ \ \ 
\mathrm{subject\ to} &\ \ g(\mathbf{x}) \leq b
\end{align*}
where $b \in \mathbb{R}$ and $n \in \mathbb{N}$ are constants and the objective and constraint functions $f$ and $g$ are \emph{linear}, i.e., $f(\mathbf{x}) = \sum_{i=1}^n f_i \mathbf{x}_i$ for some constants $f_i \in \mathbb{R}$. In DgC, $n = |B|$, $b = U$, and the objective and constraint functions are $-\hd$ and $\hc$. Although $\hc$ is linear, $-\hd$ is not; for instance, in the AT $\AND(a,b)$, one has $\hd(\mathbf{x}) = \dd(a)x_a+\dd(b)x_b+\dd(\R{T})(x_a \wedge x_b)$. To show NP-completeness, consider the \emph{decision problem} associated to CDPF, DgC and CgD:

\begin{myproblem}[Cost-damage decision problem (CDDP)]
Given a cd-AT $(T,\cc,\dd)$, a cost upper bound $U$ and a damage lower bound $L$, decide whether there exists an attack $\bx \in \FA$ such that $\hc(\bx) \leq U$ and $\hd(\bx) \geq L$. 
\end{myproblem}

CDDP can be reduced to CDPF, DgC or CgD. Theorem \ref{thm:npcomplete} shows that the knapsack decision problem can be reduced to the CDDP (in fact, a \emph{treelike} AT with $n$ BASs and a root suffices). Since the knapsack decision problem is known to be NP-complete \cite{garey1979computers} and it is straightforard to show that CDDP is in NP, we find the following result:

\begin{restatable}{theorem}{thmnpcomplete} \label{thm:npcomplete}
CDDP is NP-complete, even when restricted to treelike ATs.
\end{restatable}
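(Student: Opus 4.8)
The plan is to establish the two halves of NP-completeness separately: membership in NP, which is routine, and NP-hardness, which I obtain by a direct polynomial-time reduction from the binary knapsack decision problem, as hinted in the excerpt.

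For membership, I would use a candidate attack $\bx \in \FA = \BB^B$ as the certificate. Its description has size linear in $|B|$, and given $\bx$ one can evaluate the structure function $\struc(\bx,v)$ at every node by a single bottom-up pass over $T$ following Definition \ref{def:sf}; this yields both $\hc(\bx) = \sum_{v \in B} x_v \cc(v)$ and $\hd(\bx) = \sum_{v \in N} \struc(\bx,v)\dd(v)$ in time polynomial in the size of the cd-AT. Checking the two inequalities $\hc(\bx) \leq U$ and $\hd(\bx) \geq L$ is then immediate, so $\mathrm{CDDP} \in \mathrm{NP}$.

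For hardness, recall the binary knapsack decision problem: given weights $w_1,\dots,w_n$, values $\nu_1,\dots,\nu_n$, a weight budget $W$ and a value target $V$, decide whether some $S \subseteq \{1,\dots,n\}$ satisfies $\sum_{i \in S} w_i \leq W$ and $\sum_{i \in S} \nu_i \geq V$; this is NP-complete \cite{garey1979computers}. From such an instance I would construct the treelike cd-AT $(T,\cc,\dd)$ whose root is a single gate $\R{T} = \OR(v_1,\dots,v_n)$ with $n$ BAS children, setting $\cc(v_i) = w_i$, $\dd(v_i) = \nu_i$, and crucially $\dd(\R{T}) = 0$, together with the bounds $U = W$ and $L = V$. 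This map is clearly computable in polynomial time. Identifying an attack $\bx$ with $S = \{i \mid x_{v_i} = 1\}$, the structure function gives $\struc(\bx,v_i) = x_{v_i}$ at each leaf and $\struc(\bx,\R{T}) = \bigvee_i x_{v_i}$ at the root; since $\dd(\R{T}) = 0$ the root contributes nothing, so $\hc(\bx) = \sum_{i \in S} w_i$ and $\hd(\bx) = \sum_{i \in S} \nu_i$. Hence a CDDP-feasible attack exists iff a feasible knapsack subset exists, which establishes the reduction and, together with membership, the theorem; since the constructed $T$ is treelike, NP-completeness persists under that restriction.

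I do not expect a genuine calculational obstacle here, since the reduction is elementary. The one point requiring care is conceptual: as noted in Section \ref{sec:knapsack}, the general difficulty of CDDP stems precisely from the \emph{nonlinearity} of $\hd$, so the reduction must be arranged so that this nonlinearity is suppressed and $\hd$ collapses to the plain linear sum of activated BAS damages. Choosing a flat OR-root with zero damage does exactly this. Thus the ``hard part'' is really the modelling choice $\dd(\R{T}) = 0$, which guarantees that no internal-node damage sneaks into the total and that $\hd$ matches the knapsack objective exactly.
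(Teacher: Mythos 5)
Your proof is correct and follows essentially the same route as the paper's: NP membership via polynomial-time evaluation of $\hc$ and $\hd$ using the structure function, and NP-hardness by reducing binary knapsack to a flat one-gate treelike AT with $\cc(v_i)$ the weights, $\dd(v_i)$ the values, and $\dd(\R{T})=0$. The only difference is that you use an $\OR$ root where the paper uses $\AND$, which is immaterial since the zero root damage makes the gate type irrelevant.
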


The binary knapsack decision problem (and by extension CDDP) is known to be NP-complete \cite{karp1972reducibility}. It should come as no surprise that we do not give polynomial-time methods to solve CDPF, DgC, and CgD, but instead introduce heuristic methods. These methods discard infeasible solutions throughout the computation instead of at the end, making them faster than the naive approach.

In the literature, many extensions of the binary knapsack problem have been considered that allow less restrictive types of objectives functions, such as quadratic \cite{Gallo1980}, cubic \cite{forrester_strengthening_2022} and submodular \cite{SVIRIDENKO200441} objective functions. However, the following theorem shows that objective functions $\hd$ arising from cd-ATs form the even larger class of \emph{nondecreasing} functions (i.e., $\bx \preceq \by$ implies $f(\bx) \leq f(\by)$, see Definition \ref{def:att}).

\begin{restatable}{theorem}{thmincreasing} \label{thm:increasing}
Let $X$ be a finite set, and let $f\colon \BB^X \rightarrow \mathbb{R}_{\geq 0}$ be any nondecreasing function. Then there is a cd-AT $(T,\cc,\dd)$ with $B = X$ and $\hd = f$.
\end{restatable}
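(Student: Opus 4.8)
The plan is to realize an arbitrary nondecreasing function $f\colon \BB^X \rightarrow \Rnn$ as the damage function $\hd$ of an explicitly constructed cd-AT. The guiding idea is that every monotone Boolean-indexed function can be written as a nonnegative combination of indicator functions of \emph{upsets}, and each such indicator is exactly the structure function of an AND-gate over a subset of the BASs. So I would build $T$ with leaf set $B = X$ together with one internal gate per relevant subset, and then choose damage values $\dd$ so that the telescoping sum in Definition~\ref{def:cd} reproduces $f$.

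\smallskip
First I would set up notation: for $S \subseteq X$ let $\bx^S \in \BB^X$ be its indicator vector, and for each $S$ introduce the monomial $m_S(\bx) = \bigwedge_{v \in S} x_v$, which equals $1$ iff $\bx^S \poA \bx$. Note $m_S$ is precisely $\struc(\bx, w_S)$ where $w_S = \AND(\{v : v \in S\})$ is an AND-gate whose children are the BASs in $S$ (with $m_\varnothing \equiv 1$, realized by the root carrying $f(\mathbf 0)$). The key algebraic step is the \emph{Möbius inversion} over the Boolean lattice: any $f$ can be written uniquely as $f(\bx) = \sum_{S \subseteq X} \alpha_S\, m_S(\bx)$ with $\alpha_S = \sum_{S' \subseteq S} (-1)^{|S|-|S'|} f(\bx^{S'})$. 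I would then prove that monotonicity of $f$ forces a sign condition, but here is the subtlety: the raw Möbius coefficients $\alpha_S$ need \emph{not} all be nonnegative even when $f$ is nondecreasing, yet damage values must lie in $\Rnn$. This is the main obstacle, and overcoming it is the heart of the proof.

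\smallskip
To sidestep the sign problem I would \emph{not} use a flat sum of AND-gates directly, but instead build the tree incrementally by processing the points of $\BB^X$ in an order that refines $\poA$ (e.g. by Hamming weight). Concretely, I would define a top OR-gate whose children are gadgets, each contributing nonnegative damage, and argue that the cumulative damage reached by an attack $\bx$ telescopes to $f(\bx)$. The cleanest route is: for each $\by \in \BB^X$, attach an AND-gate $w_\by$ over exactly the support of $\by$, hang it under the OR-root, and assign it damage $\dd(w_\by) = f(\by) - \max_{\by' \spoA \by} f(\by')$ (taking the max over strictly-smaller attacks, with the convention that the max is $f(\mathbf 0)$ when there is no strictly smaller point, absorbed into the root). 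Monotonicity guarantees $f(\by) \geq f(\by')$ for all $\by' \poA \by$, so each $\dd(w_\by) \geq 0$, resolving the nonnegativity issue. One then checks by induction on the poset that $\hd(\bx) = \sum_{\by \poA \bx} \dd(w_\by)$ collapses to $f(\bx)$, because the reached gates are exactly those $w_\by$ with $\by \poA \bx$ and the damages form a telescoping sum along any maximal chain up to $\bx$. The costs $\cc$ are irrelevant to the statement, so I would just set $\cc \equiv 0$.

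\smallskip
Finally I would verify the bookkeeping: the construction uses at most $2^{|X|}$ internal gates (finite, since $X$ is finite), each gate is a legal AND- or OR-node, every leaf is a BAS, and $B = X$ as required; the root OR-gate itself carries damage $f(\mathbf 0)$ to handle the empty attack. The one genuinely delicate point is confirming the telescoping identity $\hd(\bx) = f(\bx)$ for \emph{all} $\bx$ simultaneously rather than just along a single chain — I expect to prove this by strong induction on $|\bx|$ (the number of activated BASs), using that the set $\{\by : \by \poA \bx\}$ decomposes correctly and that $\dd(w_\by)$ was defined as a difference of $f$-values. If the per-point telescoping proves awkward, a clean alternative is to argue directly that $\sum_{\by \poA \bx} \bigl(f(\by) - \max_{\by' \spoA \by} f(\by')\bigr) = f(\bx)$ by a standard chain-summation argument on the finite poset, which is where I would concentrate the detailed calculation.
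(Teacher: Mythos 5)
Your overall strategy (AND-gates as indicators of upsets, nonnegative increments extracted from monotonicity, a telescoping argument) is in the same spirit as the paper's, but the central identity your construction rests on is false, and in fact your gate topology cannot be repaired by any choice of damages. Take $X=\{a,b\}$ and the OR function: $f(0,0)=0$, $f(1,0)=f(0,1)=f(1,1)=1$. Your assignment gives $\dd(w_{(1,0)})=\dd(w_{(0,1)})=1-0=1$ and $\dd(w_{(1,1)})=1-\max(1,1)=0$, so for $\bx=(1,1)$ the reached gates yield $\hd(\bx)=1+1+0=2\neq 1=f(\bx)$. The sum $\sum_{\by\preceq\bx}\bigl(f(\by)-\max_{\by'\prec\by}f(\by')\bigr)$ telescopes only along a single chain; over the whole downset of $\bx$, incomparable elements each contribute their full marginal, and these marginals add up. Worse, the flaw is architectural rather than a matter of coefficients: a flat OR-root over AND-gates $w_{\by}$ with damages $\dd(w_{\by})\geq 0$ realizes exactly $\hd(\bx)=\sum_{\by\preceq\bx}\dd(w_{\by})$, i.e., a \emph{nonnegative} combination of the monomials $m_{\by}$, and the OR function above is not such a combination (its M\"obius coefficient at $\{a,b\}$ is $-1$, the very sign obstruction you correctly identified). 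So no assignment of nonnegative damages to the tree you describe can represent $f$, and the "standard chain-summation argument" you planned to concentrate on does not exist.

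The paper escapes this by changing the topology so that the reached damage-carrying nodes always form a chain. It picks a linear extension $\bx^1,\ldots,\bx^{2^n}$ of $(\BB^X,\preceq)$ along which $f$ is nondecreasing, builds zero-damage AND-gates $A_i$ over the supports just as you do, but places the damage on \emph{nested} OR-gates $O_j=\OR(\{A_i\mid i\geq j\})$ with $\dd(O_1)=f(\bx^1)$ and $\dd(O_{j+1})=f(\bx^{j+1})-f(\bx^j)\geq 0$, all joined under a zero-damage AND root. The key point is that $\bx^i$ reaches $O_j$ if and only if $j\leq i$, so the reached damage nodes are always a prefix of the linear order, and the telescoping runs along the total order, never across incomparable elements. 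Any repair of your argument needs some analogue of this linearization; marginals taken over the partial order itself cannot work.
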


It follows that we cannot use existing binary knapsack approaches to solve DgC, since these approaches \cite{Gallo1980,forrester_strengthening_2022,SVIRIDENKO200441} put some assumptions on $\hd$. 
Instead, we develop new techniques, based on bottom-up methods and integer linear programmming. These techniques exploit the structure of the cd-AT from which the objective $\hd$ originates.

\section{Treelike ATs, deterministic setting} \label{sec:treedet}

For treelike ATs in the deterministic setting we focus on CDPF. DgC and CgD then follow from \eqref{eq:DgC} and \eqref{eq:CgD} respectively. These single-objective problems cannot be computed easier because, as we will demonstrate below, we need to propagate (part of) the Pareto front bottom-up, rather than a single damage/cost value, to solve these problems.

\subsection{CDPF}

A naive way to solve CDPF (and with it DgC and CgD) is by calculating $\hc(\bx)$ and $\hd(\bx)$ for each $\bx \in \FA$. Since $|\FA| = 2^{|B|}$, this is impractical for large ATs, and new heuristics are needed. We solve CDPF via a bottom-up approach in which only a small set of attacks is handled at each node, and infeasilibity is determined at each node rather than at the end. The key insight to make this work is that at intermediate nodes, we perform Pareto analysis in an extended domain $\DB$, and we only project to $\DD$ at the root. 

For a node $v$, we let $T_v$ be the sub-AT of $T$ with root $v$, and we let $B_v$ be its set of BASs. At the node $v$, we are interested in the cost and damage of attacks on $T_v$, which are elements of $\BB^{B_v}$. Suppose that $\ch(v) = \{v_1,v_2\}$. Since $T$ is treelike, one has $B_{v_1} \cap B_{v_2} = \varnothing$. So $\BB^{B_v} = \BB^{B_{v_1}} \times \BB^{B_{v_2}}$, and an attack $\bx$ on $T_v$ can be written $\bx = (\bx_1,\bx_2)$ for attacks $\bx_1$ on $T_{v_1}$ and $\bx_2$ on $T_{v_2}$. With regards to cost and damage, we find
\begin{align}
\hc(\bx) &= \hc(\bx_1) + \hc(\bx_2), \label{eq:dettreecost}\\
\hd(\bx) &= \hd(\bx_1) + \hd(\bx_2)+\struc(\bx,v)\dd(v) \label{eq:dettreedamage}
\end{align}
where we recall that $\struc(\bx,v)$ is defined as
\begin{align}
\struc(\bx,v)&= \begin{cases}
x_v,& \textrm{ if $\type(v) = \BAS$},\\
\struc(\bx_1,v_1) \vee \struc(\bx_2,v_2),& \textrm{ if $\type(v) = \OR$},\\ \struc(\bx_1,v_1) \wedge \struc(\bx_2,v_2),& \textrm{ if $\type(v) = \AND$}.
\end{cases} \nonumber
\end{align}
Thus, in order to correctly calculate the cost and damage of attacks as we combine them, we need to store each attack $\bx$ as an \emph{attribute triple} in the \emph{deterministic attribute triple domain}: \label{page:dtrip}
\begin{equation*} 
\left(\begin{smallmatrix}
\hc(\bx)\\
\hd(\bx)\\
\struc(\bx,v)
\end{smallmatrix}\right) \in \DB := \Rnn \times \Rnn \times \BB.
\end{equation*}

\begin{example} \label{ex:bank2.1}
Consider the AT of Example \ref{ex:bank2}. Each BAS has only two possible attacks (activating that BAS or not) so for $\mathtt{pb}$ we have $\left\{\vvec{0\\0\\0},\vvec{3\\0\\1}\right\} \subset \DB$, and $\left\{\vvec{0\\0\\0},\vvec{2\\10\\1}\right\} \subset \DB$ for $\mathtt{fd}$. Combining these, we have four possible attacks on the $\AND$-gate $\mathtt{dr}$, which is the set
\begin{align*}
\left\{\vvec{0\\0\\0},\vvec{3\\0\\0},\vvec{2\\10\\0},\vvec{5\\110\\1}\right\} \subset \DB.
\end{align*}
\end{example}

After finding the values of all attacks on $v$ by combining those on $v_1$ and on $v_2$, we discard the infeasible ones. Infeasibility is based on two conditions:
\begin{enumerate}
    \item In DgC, if $\hc(\bx) > U$, then $\bx$ is infeasible.
    \item Other than that, feasibility is determined by Pareto optimality on the poset $(\DB,\poD)$, where $\vvec{c\\d\\b} \poD \vvec{c'\\d'\\b'}$ if and only if $c \leq c'$, $d \geq d'$ and $b \geq b'$. The first two inequalities are to be expected from cost-damage optimality. The third inequality is introduced for the following reason: if $\bx$ and $\bx'$ are two attacks on $v$ corresponding to $(c,d,0)^{\intercal}$ and $(c',d',1)^{\intercal}$, respectively, then potentially $\bx'$ can reach nodes higher up in $T$, and thereby eventually do more damage than $\bx$. However, whether this happens or not cannot be detected at the level of $v$, and therefore we need to keep both triples.
\end{enumerate}

\begin{example}\label{ex:bank2.2}
We continue Example \ref{ex:bank2.1}. At $\mathtt{dr}$, we have $\vvec{0\\0\\0} \sqsubset \vvec{3\\0\\0}$, so the latter is infeasible and discarded, leaving us with the Pareto front
\begin{align*}
\left\{\vvec{0\\0\\0},\vvec{2\\10\\0},\vvec{5\\110\\1}\right\} \subset \DB.
\end{align*}
This example shows why we need the third dimension: if not, we would have discarded the attack $\vvec{3\\0}$ at $\mathtt{pb}$ for being infeasible: $\vvec{0\\0}$ does the same damage at lower cost. However, had we done so at $\mathtt{pb}$, we would have concluded that it is always optimal not to activate $\mathtt{pb}$, thereby missing out on the attack $\vvec{5\\110}$ at $\mathtt{dr}$. By also storing the top node's activation, we ensure that activating $\mathtt{pb}$ is still considered feasible. 
\end{example}

This approach can be formally defined as follows. Let $U \in [0,\infty]$. For each $v \in N$, we define a Pareto front $\CU(v) \subseteq \DB$ \label{page:cu} (for \emph{D}eterministic) of feasible attacks on $v$. To do this, we define a map $\mU\colon \Pow(\DB) \rightarrow \Pow(\DB)$ given by \label{page:minu1}
\begin{equation*} 
\mU(X) = \underline{\min}_{\sqsubseteq}  \left\{\vvec{c\\d\\b} \in X: c \leq U\right\}
\end{equation*}
which returns the Pareto-optimal elements (w.r.t. the partial order $\poD$ of $\DB$) of a set $X$ that do not exceed the cost constraint. From now we assume that $T$ is \emph{binary}, i.e., $|\ch(v)| \in \{0,2\}$ for all $v$. Since every AT is equivalent to a binary one this assumption is purely to simplify notation. We then recursively define the Pareto front $\CU(v)$ of attribute triples, by combining elements of $\CU(v_1)$ and $\CU(v_2)$ via \eqref{eq:dettreecost} and \eqref{eq:dettreedamage} and then discarding the nonfeasible triples:
\begin{align*}
&\CU(v) \\
&= \begin{cases}
\left\{\vvec{0\\0\\0},\vvec{\cc(v)\\\dd(v)\\1}\right\}, & \textrm{ if $\type(v) = \BAS$ and } \cc(v) \leq U,\\
\left\{\vvec{0\\0\\0}\right\}, & \textrm{ if $\type(v) = \BAS$ and } \cc(v) > U,
\end{cases} \nonumber\\
&\CU(\AND(v_1,v_2))\\
&= \mU\left\{\vvec{c_1 + c_2 \\d_1+d_2+(b_1 \wedge b_2)\cdot \dd(v) \\b_1 \wedge b_2}\in \DB \middle|\vvec{c_i\\d_i\\b_i\\} \in \CU(v_i)\right\}, \\
&\CU(\OR(v_1,v_2))\\
&= \mU\left\{\vvec{c_1 + c_2 \\d_1+d_2+(b_1 \vee b_2)\cdot \dd(v) \\b_1 \vee b_2} \in \DB\middle|\vvec{c_i\\d_i\\b_i\\} \in \CU(v_i)\right\}. 
\end{align*}
These theorems show the validity of this approach.
\begin{theorem} \label{thm:DgC-tree}
The solution to DgC is given by $\max \left\{d \in \Rnn \middle| \vvec{c\\d\\b} \in \CU(\R{T})\right\}$.
\end{theorem}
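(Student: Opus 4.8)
The plan is to prove, by structural induction on $T$, a two-sided correctness invariant for $\CU(v)$ at every node $v$, from which the theorem drops out immediately at $v = \R{T}$. Writing $\tau_v(\bx) := \vvec{\hc(\bx)\\\hd(\bx)\\\struc(\bx,v)} \in \DB$ for the attribute triple of an attack $\bx$ on $T_v$, the invariant has two halves: \emph{(soundness)} every triple in $\CU(v)$ equals $\tau_v(\bx)$ for some attack $\bx$ on $T_v$ with $\hc(\bx) \leq U$; and \emph{(completeness)} for every attack $\bx$ on $T_v$ with $\hc(\bx) \leq U$ there is a triple $t \in \CU(v)$ with $t \poD \tau_v(\bx)$, i.e.\ $t$ has no larger cost, no smaller damage, and no smaller structure value.

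First I would dispatch the base case. For a BAS $v$ the only attacks on $T_v$ are $\mathbf{0}$ and $\mathbf{1}$, with attribute triples $\vvec{0\\0\\0}$ and $\vvec{\cc(v)\\\dd(v)\\1}$; comparing these with the definition of $\CU(v)$ settles both directions, noting that when $\cc(v) > U$ the activating attack is infeasible and so need not be dominated.

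The core is the inductive step, which I would carry out for $v = \AND(v_1,v_2)$ (the $\OR$ case is identical with $\wedge$ replaced by $\vee$). Soundness is direct: the combination formula mirrors \eqref{eq:dettreecost} and \eqref{eq:dettreedamage} exactly, so any combined triple is $\tau_v(\bx_1,\bx_2)$ for the attacks witnessing its two ingredients, and $\mU$ only retains triples of cost $\leq U$. For completeness, given $\bx = (\bx_1,\bx_2)$ with $\hc(\bx) \leq U$, nonnegativity of cost gives $\hc(\bx_i) \leq U$, so the inductive hypothesis yields $t_i = \vvec{c_i\\d_i\\b_i} \in \CU(v_i)$ with $t_i \poD \tau_{v_i}(\bx_i)$. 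I would then verify that the combined triple $\vvec{c_1+c_2\\d_1+d_2+(b_1\wedge b_2)\dd(v)\\b_1\wedge b_2}$ dominates $\tau_v(\bx)$: the cost inequality is additivity; the structure inequality $b_1\wedge b_2 \geq \struc(\bx,v)$ uses monotonicity of $\wedge$; and the damage inequality follows by summing $d_i \geq \hd(\bx_i)$ with $(b_1\wedge b_2)\dd(v) \geq \struc(\bx,v)\dd(v)$, where this last step crucially uses $\dd(v) \geq 0$ together with the structure inequality just established. Since this combined triple has cost $c_1+c_2 \leq \hc(\bx) \leq U$, it survives the cost filter of $\mU$, hence lies $\poD$-above some minimal element of $\CU(v)$; transitivity of $\poD$ then produces the required dominating triple.

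Finally, at the root the two halves of the invariant pinch $\dopt$ from both sides: soundness shows every $d$-value occurring in $\CU(\R{T})$ is realized by a feasible attack, so $\max\{d \mid \vvec{c\\d\\b}\in\CU(\R{T})\} \leq \dopt$, while completeness applied to a maximizer $\bx^*$ of DgC yields a triple with $d \geq \hd(\bx^*) = \dopt$, giving the reverse inequality. I expect the main obstacle to be formulating the invariant correctly rather than the calculations: the completeness direction is what forces the third coordinate into the partial order, since dropping it would let $\vvec{0\\0\\0}$ dominate $\vvec{3\\0\\1}$ and prematurely discard a locally suboptimal attack (such as activating $\mathtt{pb}$ in Example~\ref{ex:bank2.2}) that is needed to realize a larger-damage attack at an ancestor. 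Checking that the damage inequality in the inductive step still goes through under this three-coordinate domination — precisely where $\dd(v) \geq 0$ and monotonicity of the gate operators enter — is the crux of the argument.
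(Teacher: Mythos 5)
Your proof is correct, and it reaches the theorem by a genuinely different route than the paper. The paper deduces Theorem \ref{thm:DgC-tree} from a single stronger statement, Theorem \ref{thm:aux}, which asserts \emph{exact equality} between the recursively computed set and the Pareto front of feasible attribute triples at every node, $\CpU(v) = \umin\left\{\recht{EA}_v(\bx) \mid \bx \in \mathcal{A}_v,\ \hc_v(\bx)\le U\right\}$; this is proven by induction using set-algebraic identities (Lemma \ref{lem:aux2}) showing that $\umin$ and the cost filter $H_U$ commute with each other and can be pushed through the combination operators $\triangle_d$ and $\triangledown_d$ of Lemma \ref{lem:aux1}. Moreover, the paper treats the deterministic case as the specialization $\pp \equiv 1$ of the probabilistic one, so that one argument serves Theorems \ref{thm:DgC-tree}, \ref{thm:CDPF-tree}, \ref{thm:EDgC-tree} and \ref{thm:CEDPF-tree} simultaneously. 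You instead stay entirely in the deterministic setting and prove a deliberately weaker two-sided invariant (soundness plus domination-completeness): elementwise monotonicity of the gate combinations with respect to $\poD$ --- precisely where $\dd(v)\ge 0$ enters --- plays the role of identities \eqref{eq:auxeq4}--\eqref{eq:auxeq5}, and the observation that every surviving element lies $\poD$-above some minimal element replaces the exact-equality bookkeeping. Your version buys a shorter, self-contained proof of this particular theorem, and would also yield Theorem \ref{thm:CDPF-tree}, since the root-level $\umin$ there absorbs any dominated triples your invariant tolerates inside $\CU(\R{T})$. The paper's version buys reusability: the same machinery covers the probabilistic Theorems \ref{thm:EDgC-tree} and \ref{thm:CEDPF-tree}, which under your approach would have to be redone with the probabilistic structure function $\kk$ in place of $\struc$.
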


\begin{theorem} \label{thm:CDPF-tree}
The solution to CDPF is given by $\min \pi(\mathcal{C}^{\recht{D}}_{\infty}(\recht{R}_T))$, where $\pi\colon \DB\rightarrow \DD$ is the projection map onto the first two components.
\end{theorem}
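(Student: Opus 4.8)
The plan is to prove one correctness invariant for the recursion and then finish with a projection argument. For a node $v$ introduce the triple-evaluation map $\Phi_v\colon\BB^{B_v}\to\DB$, $\Phi_v(\bx)=\vvec{\hc(\bx)\\\hd(\bx)\\\struc(\bx,v)}$, and for $U\in[0,\infty]$ let $\FA_v^U=\{\bx\in\BB^{B_v}\mid\hc(\bx)\le U\}$ be the cost-feasible attacks on $T_v$. The core claim is the invariant
\[
\CU(v)=\umin_{\poD}\Phi_v(\FA_v^U)\qquad\text{for all }v,
\]
proved by structural (bottom-up) induction on $T_v$. Granting it, Theorem~\ref{thm:CDPF-tree} is the case $v=\R{T}$, $U=\infty$: there the cost filter in $\mU$ is vacuous, so $\C{\infty}(\R{T})=\umin_{\poD}\Phi_{\R{T}}(\FA)$, and it remains only to project away the activation coordinate.

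The base case $\type(v)=\BAS$ is immediate: the two attacks $0,1$ map to the $\poD$-incomparable triples $\vvec{0\\0\\0}$ and $\vvec{\cc(v)\\\dd(v)\\1}$, and the filter keeps the latter exactly when $\cc(v)\le U$, matching the definition of $\CU(v)$. For $v=\AND(v_1,v_2)$ (the $\OR$ case is symmetric), treelikeness gives $\BB^{B_v}=\BB^{B_{v_1}}\times\BB^{B_{v_2}}$, and \eqref{eq:dettreecost}, \eqref{eq:dettreedamage} with the structure-function recursion show $\Phi_v(\bx_1,\bx_2)=\Phi_{v_1}(\bx_1)\star\Phi_{v_2}(\bx_2)$, where $\star$ is exactly the combination operator appearing in the definition of $\CU(\AND(v_1,v_2))$. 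The two facts I need are: (i) $\star$ is monotone in each argument for $\poD$, since cost is additive, $b_1\wedge b_2$ is monotone in each $b_i$, and $\dd(v)\ge0$ makes the damage coordinate monotone in $d_1,d_2,b_1,b_2$; and (ii) the cost filter is compatible with $\poD$, because cost is the first coordinate so $\poD$-domination implies cost-domination, while cost is additive and nonnegative, so neither pruning child triples nor pre-filtering cost-$>U$ child attacks can remove a triple needed to dominate a feasible combination.

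The engine for both the inductive step and the final projection is a single domination lemma: if $A'\subseteq A$ are finite subsets of a poset and every element of $A$ dominates some element of $A'$, then $\umin_{\poD}A=\umin_{\poD}A'$. For the induction, take $A=\Phi_v(\FA_v^U)$ and $A'=\{\xi_1\star\xi_2\mid\xi_i\in\CU(v_i),\ \xi_1\star\xi_2\text{ has cost}\le U\}$; monotonicity of $\star$ together with the induction hypothesis $\CU(v_i)=\umin_{\poD}\Phi_{v_i}(\FA_{v_i}^U)$ supplies the required domination, and the lemma gives $\CU(v)=\mU\{\xi_1\star\xi_2\mid\xi_i\in\CU(v_i)\}=\umin_{\poD}\Phi_v(\FA_v^U)$. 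For the theorem, observe that $\pi$ is monotone ($\xi\poD\xi'$ in $\DB$ implies $\pi(\xi)\poD\pi(\xi')$ in $\DD$) and $\E=\pi\circ\Phi_{\R{T}}$; applying the lemma with $A=\pi(\Phi_{\R{T}}(\FA))$ and $A'=\pi(\C{\infty}(\R{T}))$ yields
\[
\min\pi(\C{\infty}(\R{T}))=\umin_{\poD}\pi\big(\umin_{\poD}\Phi_{\R{T}}(\FA)\big)=\umin_{\poD}\pi(\Phi_{\R{T}}(\FA))=\umin_{\poD}\E(\FA)=\PF,
\]
where $\min$ abbreviates $\umin_{\poD}$ on $\DD$.

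The step I expect to be the main obstacle is the inductive pruning: showing that replacing the full product of child attacks by the pruned Pareto fronts $\CU(v_i)$ discards no eventually-optimal attack. This is precisely where the third coordinate is essential — a triple that is $\poD$-dominated at a child can re-emerge as Pareto-optimal higher up only through a larger activation bit, so the domain must be $\DB$ rather than $\DD$ and the induction hypothesis must carry activation. Verifying monotonicity of $\star$ in that coordinate case-by-case for $\AND$ and $\OR$, and confirming that the cost filter never interferes, is the technical heart; everything else reduces to the elementary domination lemma.
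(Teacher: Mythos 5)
Your proposal is correct and takes essentially the same route as the paper: the appendix proves exactly your invariant (Theorem \ref{thm:aux}, stated for the probabilistic fronts $\CpU$ and specialized to $\CU$ by setting $\pp \equiv 1$) by structural induction on the tree, and then finishes with the same projection step, via the identity $\umin \circ \pi \circ \umin = \umin \circ \pi$. The only difference is bookkeeping: where you invoke a single domination lemma together with monotonicity of the combination operator and compatibility of the cost filter, the paper proves commutation identities between $\umin$, the cost filter $H_U$, and its combination operators $\triangle_d$, $\triangledown_d$ (Lemma \ref{lem:aux2}), which encode the same facts.
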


\begin{example} We continue Examples \ref{ex:bank2.1} and \ref{ex:bank2.2}, for $U = \infty$, in which we calculated $\C{\infty}(\mathtt{dr})$. Below shows the calculation for $\C{\infty}(v)$ for every node; underlined vectors are infeasible and are not part of $\C{\infty}(v)$. The top set is the solution to CDPF.

\vspace{1em}

\begin{tikzpicture}
\node (q0) [rectangle] at (0,0) {$\left\{\vvec{0\\0\\0},\vvec{3\\0\\1}\right\}$} ;
\node at (0,-0.5) {$\mathtt{pb}$};
\node (q1) [rectangle] at (4,0) {$\left\{\vvec{0\\0\\0},\vvec{2\\10\\1}\right\}$} ;
\node at (4,-0.5) {$\mathtt{fd}$};
\node (q3) [rectangle] at (2,1.5) {$\left\{\vvec{0\\0\\0},\sout{\vvec{3\\0\\0}},\vvec{2\\10\\0},\vvec{5\\110\\1}\right\}$};
\node at (2,1) {$\mathtt{rd}$};
\node (q2) [rectangle] at (-2,1.5) {$\left\{\vvec{0\\0\\0},\vvec{1\\0\\1}\right\}$};
\node at (-2,1) {$\mathtt{ca}$};
\node (q4) [rectangle] at (0,3) {$\left\{\vvec{0\\0\\0},\vvec{1\\200\\1},\sout{\vvec{2\\10\\0}},\vvec{3\\210\\1},\vvec{5\\310\\1},\sout{\vvec{6\\310\\1}}\right\}$};
\node at (0,2.5) {$\mathtt{ps}$};
\node (q5) [rectangle] at (0,4.5) {$\left\{\vvec{0\\0},\vvec{1\\200},\vvec{3\\210},\vvec{5\\310}\right\}$};
\draw (q0) edge (q3)
(q1) edge (q3)
(q2) edge (q4)
(q3) edge (q4)
(q4) edge (q5);
\end{tikzpicture}

\vspace{-1em}
\end{example}

\subsection{DgC and CgD}

For DgC we still have to compute a Pareto front at every node $v$, instead of taking the most damaging attack satisfying {the} cost constraint {$\hc(\bx) \leq U$}{, for the following reason. Suppose $\vvec{c\\d\\0},\vvec{c'\\d'\\1}$ are the attribute triples of two attacks on a node $v$ with $c,c' \leq U$ and $d > d'$. If we would just keep the most damaging attack, we would have to discard $\vvec{c'\\d'\\1}$; however, similar to Example \ref{ex:bank2.2}, this could cause us to miss high damage attacks later on in the bottom-up process.}
Thus a `simple' bottom-up approach, in which only a single attack value is propagated, does not work; the best we can do is exclude attacks that at a node already exceed the cost budget. For CgD even this is impossible, as attacks that do not yet satisfy the minimum damage at a certain node may yet do so later.

\subsection{Complexity}

Theorem \ref{thm:complexity} states that the approaches of Theorems \ref{thm:DgC-tree} and \ref{thm:CDPF-tree} are of exponential complexity.

\begin{restatable}{theorem}{thmcomplexity} \label{thm:complexity}
The complexity of solving DgC and CDPF via Theorems \ref{thm:DgC-tree} and \ref{thm:CDPF-tree} is $\mathcal{O}(2^{|B|})$. For CDPF this cannot be improved.
\end{restatable}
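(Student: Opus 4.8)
The plan is to establish Theorem~\ref{thm:complexity} in three parts: an upper bound on the size of each Pareto front $\CU(v)$, a propagation of that bound through the bottom-up recursion to obtain the $\mathcal{O}(2^{|B|})$ running time, and a matching lower bound (the ``cannot be improved'' clause) via an explicit family of cd-ATs whose CDPF has size exponential in $|B|$.

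For the upper bound, the key observation is that every attribute triple $\vvec{c\\d\\b} \in \CU(v)$ is the image under $\E$ (extended with the structure-function coordinate) of some attack $\bx \in \BB^{B_v}$. Since there are $2^{|B_v|}$ such attacks, we have $|\CU(v)| \leq 2^{|B_v|+1}$ trivially (the extra factor accounting for the $b$-coordinate, though in fact distinct attacks may collapse to the same triple, so this is a crude bound). First I would argue that the cost of computing $\CU(v)$ from $\CU(v_1)$ and $\CU(v_2)$ is dominated by forming the product set, which has size $|\CU(v_1)| \cdot |\CU(v_2)|$, followed by the Pareto-reduction $\mU$. Because $T$ is binary and treelike, the BAS-sets $B_{v_1}, B_{v_2}$ are disjoint and partition $B_v$, so $2^{|B_{v_1}|} \cdot 2^{|B_{v_2}|} = 2^{|B_v|}$. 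Summing the work over all $\mathcal{O}(|B|)$ nodes and bounding each node's contribution by the root's $2^{|B|}$, the total is $\mathcal{O}(2^{|B|})$ up to the cost of $\mU$, which is at worst quadratic in the input set size and hence does not change the exponential order.

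For the lower bound I would exhibit a specific treelike cd-AT on $n$ BASs for which $|\PF| = \min\pi(\CfI(\R{T}))$ is itself exponential in $n$, so that any correct algorithm, merely by writing down the output, must take $\Omega(2^{cn})$ time for some $c>0$. The natural candidate is a balanced tree built from $\OR$- and $\AND$-gates with cost and damage values chosen so that distinct attacks yield pairwise-incomparable points in $\DD$ --- for instance, assigning the BASs costs that are powers of two (or more robustly, costs and damages forming a ``staircase'' where increasing cost strictly increases maximal achievable damage and no two subsets tie). The cleanest route is to pick costs $\cc(v_i)$ and damages so that the $2^n$ cost values $\hc(\bx)$ are all distinct and the damage $\hd$ is strictly increasing along each cost-increasing direction, forcing every attack onto the Pareto front.

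\textbf{The main obstacle} is the lower bound: I must construct damage values on internal nodes that genuinely realize an exponentially large Pareto front, rather than merely an exponentially large \emph{intermediate} front $\CU(v)$ that collapses after projection $\pi$ at the root. It is easy to make the triples $\CU(v)$ large at internal nodes, but Theorem~\ref{thm:CDPF-tree} only claims the final answer is $\min\pi(\CfI(\R{T}))$, so the exponential blowup must survive to the root to prove optimality of the \emph{output}, not just of the intermediate data structures. Here I would lean on Theorem~\ref{thm:increasing}: since $\hd$ can be any nondecreasing function, I can choose one whose restriction to $\FA$ (combined with the linear $\hc$ using, say, unit costs) makes all $2^n$ attacks mutually Pareto-incomparable, e.g. by taking $\hd(\bx)$ strictly increasing in $|\bx|$ but arranged so that attacks of equal size get distinct damages separating them across cost levels. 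Verifying incomparability of all $2^n$ points --- that is, that for this construction no attack dominates another in the $(\hc,\hd)$ order $\poD$ --- is the technical heart, and I expect it to reduce to checking a single monotonicity-plus-separation inequality on the chosen weights.
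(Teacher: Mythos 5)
Your overall architecture (a per-node work bound plus an output-size lower bound) is the same as the paper's, and your lower-bound plan is essentially the paper's Example~\ref{ex:complexity}: a single OR gate over BASs $v_0,\ldots,v_{n-1}$ with $\cc(v_i)=\dd(v_i)=2^i$ and $\dd(\R{T})=0$ makes every attack's cost equal its damage, with all $2^n$ values distinct, so all $2^n$ points are pairwise $\poD$-incomparable and the final Pareto front itself (not merely an intermediate front) has size $2^{|B|}$. You do not need Theorem~\ref{thm:increasing}, a balanced tree, or any delicate separation argument for this; the flat OR construction already survives the projection $\pi$ at the root, which resolves the obstacle you flagged as the technical heart.

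The genuine gap is in your upper bound. You bound the work at each node by the root's $2^{|B|}$ and sum over $\mathcal{O}(|B|)$ internal nodes; that yields $\mathcal{O}(|B|\cdot 2^{|B|})$, which is \emph{not} $\mathcal{O}(2^{|B|})$, so the theorem as stated is not proved. The paper closes exactly this hole with Lemma~\ref{lem:compaux}: the work at $v$ is at most $|\CU(v_1)|\cdot|\CU(v_2)| \leq 2^{b(v)}$, where $b(v)$ is the number of BAS descendants of $v$ (by induction $|\CU(v)|\leq 2^{b(v)}$), and if the internal nodes are enumerated $v_1,\ldots,v_{|B|-1}$ in nondecreasing order of $b$, then $b(v_i)\leq i+1$. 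The total work is then the geometric sum $\sum_{i=1}^{|B|-1} 2^{i+1} = 2^{|B|+1}-2 = \mathcal{O}(2^{|B|})$, rather than $|B|$ copies of the largest term; some argument of this kind (node-local bounds that decay over the sorted enumeration) is indispensable. Relatedly, your claim that the reduction $\mU$ is ``at worst quadratic in the input set size and hence does not change the exponential order'' is false under the stated target: quadratic in a candidate set of size $2^{|B|}$ is $4^{|B|}$, which does break the $\mathcal{O}(2^{|B|})$ bound. The paper's proof counts the number of candidate triples generated and does not charge a superlinear cost for $\mU$; if you insist on a quadratic pairwise reduction, even the corrected geometric-sum argument fails, so you must either adopt the paper's accounting or use a (near-)linear-time Pareto reduction at each node.
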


The fact that CDPF cannot be computed in less than exponential time can be seen from Example \ref{ex:complexity} below, from the simple reason that the Pareto front may be of exponential size. For DgC, we have improved on the efficiency of CDPF in practice by disregarding attacks that exceed the cost constraint at any node. This does not work for CgD: if an attack does not reach the damage goal at node $v$, that is no reason to regard it as infeasible, because it may be combined with other attacks to increase damage. Therefore we need the full Pareto front to solve CgD in this fashion.

\begin{example} \label{ex:complexity}
Let $T$ be the AT given by $\R{T}= \OR(v_0,\ldots,v_{n-1})$, where $\type(v_i) = \BAS$ and $\cc(v_i) = \dd(v_i) = 2^i$ for all $i < n$; furthermore $\dd(\R{T}) = 0$. Then
\begin{equation*}
\forall \bx \in \mathcal{A}\colon \E(\bx) = \vvec{\sum_{i\colon x_{v_i} = 1} 2^i\\\sum_{i\colon x_{v_i} = 1} 2^i},
\end{equation*}
so each $\E(\bx)$ is optimal in $\E(\FA) = \left\{\vvec{k\\k} \in \DD\middle| k \in \{0,\ldots,2^{n}-1\}\right\}$. It follows that $|\PF| = |\FA| = 2^{|B|}$.
\end{example}

\section{DAG-like ATs, deterministic setting} \label{sec:dagdet}

If $T$ is DAG-like, then the approach outlined in the previous section does not give the correct answer to DgC and CDPF. This is because for a node $v$ with children $v_1$, $v_2$, the sub-ATs $T_{v_1}$ and $T_{v_2}$ may no longer be disjoint, and so equations \eqref{eq:dettreecost} and \eqref{eq:dettreedamage} no longer hold. In particular, if $v_1 = v_2$, we have $\hc(\bx) = \hc(\bx_1) = \hc(\bx_2)$ rather than \eqref{eq:dettreecost}.

\subsection{CDPF}

Instead, to solve CDPF, we introduce a novel approach based on \emph{Biobjective Integer Linear Programming} (BILP) \cite{ozlen2009multi}, i.e., an integer linear programming problem with two objective functions. A BILP problem is of the following form:
\begin{align}
\mathrm{minimise}_{\by \in \mathbb{Z}^n} \ \ \vvec{c_1 \cdot \by \\c_2 \cdot \by} \label{eq:bilp} \ \ \  \mathrm{subject\ to} & \ \ A \cdot \by \leq 0
\end{align}
where $c_1,c_2 \in \mathbb{R}^n$ and $A \in \mathbb{R}^{m \times n}$ for some integers $m,n$. The solution to this BILP problem is the Pareto front 
\begin{equation*}
\underline{\min} \left\{\vvec{c_1 \cdot \by\\c_2 \cdot \by} \in \mathbb{R}^2 \middle| \by \in \mathbb{Z}^n, A\by \leq 0\right\},
\end{equation*}
where $\underline{\min}$ is taken in the poset $(\mathbb{R}^2,\leq)$. Solvers for BILP problems work by repeatedly solving single-objective integer linear programming problems \cite{ozlen2009multi}.
In our case, as variables we use $\by \in \{0,1\}^N$, where we want $y_v$ to represent $\struc(\bx,v)$ for an attack $\bx$ (so $y_v = x_v$ for $v \in B$). Then the objective functions are
\begin{align*}
\hc(\bx) &= \sum_{v \in B} \cc(v) y_v, & -\hd(\bx) &= -\sum_{v \in N} \dd(v) y_v.
\end{align*}
We now need to describe the linear constraints on $\by$. If $v$ is an AND-gate, we introduce a constraint $y_v \leq y_w$ for all children $w$ of $v$; this ensures that $y_v = 0$ whenever at least one of the children has $y_w = 0$. If $v$ is an OR-gate, then we introduce a constraint $y_v \leq \sum_{w \in \recht{ch}(v)} y_w$. Together, these constraints ensure that $y_v \leq \struc(\bx,v)$ for all $v$. Extra constraints that ensure the equality $y_v = \struc(\bx,v)$ are not necessary, because taking $\by$ such that equality holds turns out to be always Pareto optimal. This then leads to the following result:

\begin{restatable}{theorem}{thmbilp} \label{thm:bilp}
CDPF is solved by solving the BILP problem given by
    \begin{align}
    \mathrm{minimise}_{\by \in \{0,1\}^N} \ \ & \left(\begin{array}{c}
    -\sum_{v \in N} \dd(v)y_v \\
    \sum_{v \in B} \cc(v)y_v
    \end{array}\right) \label{eq:detdag}\\
    \mathrm{subject\ to} \ \ & \forall v \in \{v' \in V \mid \type(v') = \AND\}.\nonumber\\
    & \quad \quad \forall w \in \ch(v). \ y_v \leq y_w , \nonumber\\
    & \forall v \in \{v' \in V \mid \type(v') = \OR\}.\nonumber\\
    & \quad \quad \ y_v \leq \sum_{w \in \mathrm{ch}(v)} y_w \nonumber
    \end{align}
\end{restatable}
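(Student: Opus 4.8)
The plan is to show that the BILP problem in \eqref{eq:detdag} has the same Pareto front as CDPF, by establishing a correspondence between feasible BILP solutions $\by \in \{0,1\}^N$ and attacks $\bx \in \FA$. The essential observation, already flagged in the text preceding the theorem, is that the constraints force $y_v \leq \struc(\bx,v)$ (where $\bx$ denotes the restriction of $\by$ to $B$), but do \emph{not} force equality; the crux of the argument is that on the Pareto front equality always holds, so nothing is lost.

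First I would make the constraint analysis precise. Restricting any feasible $\by$ to the BASs gives an attack $\bx := \by|_B \in \FA$, and I would prove by induction on the height of $v$ that every feasible $\by$ satisfies $y_v \leq \struc(\bx,v)$ for all $v \in N$. The base case is $v \in B$, where $y_v = x_v = \struc(\bx,v)$ by definition. For the inductive step, if $\type(v) = \AND$ then the constraints $y_v \leq y_w$ for all $w \in \ch(v)$ combined with the induction hypothesis give $y_v \leq \min_w y_w \leq \min_w \struc(\bx,w) = \bigwedge_w \struc(\bx,w) = \struc(\bx,v)$ (using that these are $\BB$-valued); if $\type(v) = \OR$, then $y_v \leq \sum_{w} y_w \leq \sum_w \struc(\bx,w)$, and since $y_v \in \{0,1\}$, $y_v = 1$ forces some $\struc(\bx,w) = 1$, hence $y_v \leq \bigvee_w \struc(\bx,w) = \struc(\bx,v)$.

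Next I would exhibit the correspondence in both directions. Given an attack $\bx$, define $\by^{\bx}$ by $y_v^{\bx} = \struc(\bx,v)$; by Definition \ref{def:sf} this satisfies all the constraints with equality, so it is feasible, and its two objective values are exactly $-\hd(\bx)$ and $\hc(\bx)$ by Definition \ref{def:cd}. This shows every attribute pair $\E(\bx)$ (with sign-flipped damage) is attained by a feasible $\by$, so the BILP front lies coordinatewise at least as low as the image of $\FA$. For the converse — the heart of the proof — I would argue that any Pareto-optimal BILP solution $\by$ must equal $\by^{\bx}$ for $\bx = \by|_B$. Indeed, $\by^{\bx}$ is feasible, has the same cost $\hc(\bx) = \sum_{v \in B}\cc(v)y_v$ (as it agrees with $\by$ on $B$), and satisfies $y_v^{\bx} = \struc(\bx,v) \geq y_v$ by the inductive bound, so its damage $\hd(\bx) = \sum_v \dd(v)\struc(\bx,v) \geq \sum_v \dd(v) y_v$ since $\dd \geq 0$. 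Thus $\by^{\bx}$ weakly dominates $\by$ in the BILP poset $(\mathbb{R}^2,\leq)$, so Pareto optimality of $\by$ forces the objective values to coincide, and the two fronts agree.

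I expect the main obstacle to be the direction of the monotonicity bound and its interaction with nonnegativity of $\dd$: one must check that weakening $y_v$ below $\struc(\bx,v)$ can only \emph{decrease} damage (because $\dd(v) \geq 0$) while leaving cost unchanged (because cost depends only on $B$, where $\by$ and $\by^{\bx}$ agree), so that replacing $\by$ by $\by^{\bx}$ never hurts either objective. A secondary subtlety is bookkeeping the sign convention: CDPF uses the poset $\poD$ on $\DD$ (minimize cost, \emph{maximize} damage), whereas \eqref{eq:bilp} minimizes both coordinates, so I would carefully confirm that flipping the damage sign turns $\poD$ into the standard product order $\leq$ on $\mathbb{R}^2$ and hence that $\umin_{\poD}\E(\FA)$ maps bijectively onto the BILP Pareto front under $(c,d) \mapsto (-d,c)$.
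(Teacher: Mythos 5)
Your proposal is correct and follows essentially the same route as the paper's own proof: the inductive claim that feasibility forces $y_v \leq \struc(\bx,v)$, the replacement of a feasible $\by$ by the structure-function vector $\by^{\bx}$ (which preserves cost and, by nonnegativity of $\dd$, weakly increases damage, so minimality forces the objectives to coincide), and the sign-flip identification of $\umin_{\poD}\E(\FA)$ with the BILP front. One trivial inaccuracy: $\by^{\bx}$ does not satisfy the constraints \emph{with equality} in general (e.g.\ an $\AND$ node with children of differing values, or an $\OR$ node with two active children), but feasibility is all your argument uses, so nothing breaks.
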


\begin{example}
Applying Theorem \ref{thm:bilp} to {the AT and cost/damage values of} Example \ref{ex:bank2} yields the following BILP problem:
\begin{align*}
\mathrm{minimise}_{\by \in \{0,1\}^N} \ \ & \vvec{y_{\mathtt{ca}}+3y_{\mathtt{pb}}+2y_{\mathtt{fd}}\\-10y_{\mathtt{fd}-100y_{\mathtt{dr}}-200y_{\mathtt{ps}}}} \\
\mathrm{subject\ to} \ \ & y_{\mathtt{dr}} \leq y_{\mathtt{fb}},\\
& y_{\mathtt{dr}} \leq y_{\mathtt{fd}},\\
& y_{\mathtt{ps}} \leq y_{\mathtt{ca}}+y_{\mathtt{dr}}.
\end{align*}
\end{example}

\subsection{CgD and DgC}

We solve CgD and DgC, by deriving \emph{constrained single-objective optimization problems} from \eqref{eq:detdag}. Associated to a BILP problem {\eqref{eq:bilp}} one {has} these single-objective problems:
\begin{align*}
\mathrm{minimize}_{\by \in \mathbb{Z}^n} & \ \ c_1 \cdot \by & \ \ 
\mathrm{subject\ to} & \ \ A \cdot \by \leq 0, \\ 
&&& \ \ c_2 \cdot \by \leq C_2,\\
\mathrm{minimize}_{\by \in \mathbb{Z}^n} & \ \ c_2 \cdot \by & \ \
\mathrm{subject\ to} & \ \ A \cdot \by \leq 0, \\
&&& \ \ c_1 \cdot \by \leq C_1.
 \end{align*}
These are standard integer linear program (ILP) problems, for which efficient solvers exist \cite{Chen2011}. By applying this to \eqref{eq:detdag}, we can formulate DgC and CgD as single-objective ILP problems, which can be fed to a solver.

\begin{theorem} \label{thm:ilp}
DgC and CgD are solved by solving the constrained single-objective optimization problems derived from \eqref{eq:detdag} with respective added constraints
\begin{align*}
\sum_{v \in B} \cc(v)y_v &\leq U, & -\sum_{v \in N} \dd(v)y_v &\leq -L.
\end{align*}
\end{theorem}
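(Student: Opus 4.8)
The plan is to leverage Theorem~\ref{thm:bilp} together with the structural facts established in the paragraph preceding it: for $\by \in \{0,1\}^N$ the AND/OR constraints are equivalent to $y_v \leq \struc(\bx,v)$ for all $v$, where $\bx = (y_v)_{v \in B}$ is the attack determined by the leaf-coordinates of $\by$, and the choice $y_v = \struc(\bx,v)$ always satisfies these constraints. I would first isolate this correspondence as the backbone of the argument: any feasible $\by$ determines an attack $\bx$ with $\hc(\bx) = \sum_{v \in B}\cc(v)y_v$ (an \emph{exact} equality, since cost depends only on leaf-coordinates) and with $\sum_{v \in N}\dd(v)y_v \leq \hd(\bx)$ (an \emph{inequality}, since $\dd(v) \geq 0$ and $y_v \leq \struc(\bx,v)$); conversely, for any attack $\bx$ the vector $y_v := \struc(\bx,v)$ is feasible and turns both relations into equalities.

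For DgC, the added constraint $\sum_{v \in B}\cc(v)y_v \leq U$ is, by the exact cost equality, precisely $\hc(\bx) \leq U$. I would then establish both inequalities between the ILP optimum and $-\dopt$. For any feasible $\by$, the objective satisfies $-\sum_{v}\dd(v)y_v \geq -\hd(\bx) \geq -\dopt$, the last step because $\bx$ is cost-feasible, so the ILP minimum is $\geq -\dopt$. For the reverse, taking an attack $\bx^\ast$ attaining $\dopt$ and setting $y_v = \struc(\bx^\ast,v)$ gives a feasible point of objective value exactly $-\hd(\bx^\ast) = -\dopt$, so the ILP minimum is $\leq -\dopt$. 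Combining, the ILP minimum equals $-\dopt$, so $\dopt$ is recovered by negating the solver's output.

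For CgD the roles of the two functions swap, and the argument is symmetric but with one extra wrinkle. The cost objective $\sum_{v\in B}\cc(v)y_v$ is again exactly $\hc(\bx)$, while the added constraint $-\sum_v \dd(v)y_v \leq -L$ reads $\sum_v \dd(v)y_v \geq L$. Here I would use the inequality direction: any feasible $\by$ has $\hd(\bx) \geq \sum_v \dd(v)y_v \geq L$, so its cost $\hc(\bx)$ is the cost of a genuine attack meeting the damage threshold, giving $\mathrm{ILP\text{-}min} \geq \copt$. For the reverse, an optimal attack $\bx^\ast$ with $\hc(\bx^\ast) = \copt$ and $\hd(\bx^\ast) \geq L$ yields, via $y_v = \struc(\bx^\ast,v)$, a feasible point of cost exactly $\copt$ whose damage equals $\hd(\bx^\ast) \geq L$; hence $\mathrm{ILP\text{-}min} \leq \copt$, and the two bounds coincide.

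I expect the main obstacle to be handling the inequality $y_v \leq \struc(\bx,v)$ rather than equality in the constraints, i.e.\ justifying that relaxing equality to $\leq$ does not alter the optimum. For DgC this is clean, because the nonnegative damage coefficients drive each $y_v$ up to its structural bound at any optimum, so the relaxed and exact problems coincide. For CgD the point is subtler: since internal nodes carry no cost, the solver may inflate the internal $y_v$ freely at no penalty, and I must verify that the damage constraint can still only be met up to the true available damage $\hd(\bx)$ of the underlying attack. This is exactly where the inequality $\sum_v \dd(v) y_v \leq \hd(\bx)$ (for the $\geq \copt$ direction) and the equality-feasibility of $y_v = \struc(\bx^\ast,v)$ (for the $\leq \copt$ direction) are both required.
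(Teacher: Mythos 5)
Your proposal is correct and follows essentially the same route as the paper: the paper does not spell out a separate proof of Theorem~\ref{thm:ilp}, stating only that it is ``essentially the same but less involved'' as the proof of Theorem~\ref{thm:bilp}, and your argument is exactly that one-dimensional specialization, built on the same two ingredients (the induction-backed inequality $y_v \leq \struc(\bx,v)$ for feasible $\by$, and the feasibility of the choice $y_v = \struc(\bx,v)$), with the Pareto-front set equality replaced by two-sided bounds on the scalar optimum. Your closing observation about why the relaxation to inequality constraints is harmless in both directions is precisely the point the paper glosses over.
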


Note that to solve DgC and CgD via Theorem \ref{thm:ilp}, one does not need to first solve the BILP problem \eqref{eq:detdag}, but one can directly solve the single-objective problem.

\section{Probabilistic cost-damage Pareto front} \label{sec:prob}

So far, we have assumed that any BAS undertaken by the attacker will succeed. However, in reality an attempted BAS may or may not succeed. Following earlier work \cite{rauzy1993new,BS21} we now assume a \emph{probabilistic setting} in which each BAS $v$ has a success probability $\pp(v)$. More precisely, we assume:
\begin{enumerate}
    \item The activation of the BASs may or may not succeed;
    \item The successes of different BASs are independent;
    \item The attacker pays the cost of a BAS, whether its activation succeeds or not;
    \item All BASs are attempted simultaneously and paid for in advance; 
    \item Each BAS can only be attempted once.
\end{enumerate}

The independence assumption is standard \cite{rauzy1993new,BS21}, while the other assumptions lead to the most straightforward setting. Extensions are possible: for instance, the attacker might recoup some of the costs of failed activations, or BASs are attempted one by one and the attacker may choose to reallocate their budget based on BASs that have succeeded or failed their activation thusfar. Such extensions lead to more complicated models, and are left to future work. 

\begin{definition} \label{def:cdp}
A \emph{cdp-AT} is a tuple $(T,\cc,\dd,\pp)$ of an AT $T$ and maps $\cc\colon B\rightarrow \Rnn$, $\dd\colon N \rightarrow \Rnn$, and $\pp\colon B \rightarrow [0,1]$.
\end{definition}

In a cdp-AT, the damage done by an attack is a random variable: its value depends on the \emph{actualized attack}, i.e., the BASs that succeed. Therefore, an attacker is interested in the \emph{expected damage} of an attack.

\begin{definition} \label{def:de}
Let $(T,\cc,\dd,\pp)$ be a cdp-AT. For $\bx \in \mathcal{A}$, define the \emph{actualized attack} to be the random variable $Y_{\bx}$ on $\mathcal{A}$ given by
\begin{equation*}
\Prob(\Yx = \by) = \begin{cases}
\prod_{v\colon x_v = 1} \pp(v)^{y_v}(1-\pp(v))^{1-y_v}, & \textrm{ if $\by \preceq \bx$,}\\
0, & \textrm{ otherwise.}
\end{cases}
\end{equation*}
We define the \emph{expected damage} of an attack to be $\hdE(\bx) = \Ex[\hd(\Yx)] \in \Rnn$.
\end{definition}

\begin{example} \label{ex:bank4}
We return to the setting of Example \ref{ex:bank2}. We extend the cd-AT $(T,\cc,\dd)$ with a probability map $\pp\colon B \rightarrow [0,1]$ given by $\pp(\mathtt{ca}) = 0.2$, $\pp(\mathtt{pb}) = 0.4$ and $\pp(\mathtt{fd}) = 0.9$. We use this to calculate the function $\hdE$; we write an attack $\bx$ as the vector $(x_{\mathtt{ca}},x_{\mathtt{pb}},x_{\mathtt{fd}})$. Then the random variable $Y_{(0,1,1)}$ is given by
\begin{align*}
\Prob[Y_{(0,1,1)} = (0,0,0)] &= 0.6 \cdot 0.1 = 0.06,\\
\Prob[Y_{(0,1,1)} = (0,0,1)] &= 0.6 \cdot 0.9 = 0.54,\\
\Prob[Y_{(0,1,1)} = (0,1,0)] &= 0.4 \cdot 0.1 =  0.04,\\
\Prob[Y_{(0,1,1)} = (0,1,1)] &= 0.4 \cdot 0.9 = 0.36.
\end{align*}
\end{example}

Similar to $\hdE$, we also define $\EE(\bx) = (\hc(\bx),\hdE(\bx)) \in \DD$. We then have the following probabilistic counterparts of CDPF, DgC, and CgD:

\hlbox{
\textbf{Problems.}
Given a cdp-AT $(T,\cc,\dd,\pp)$, solve the following problems: \label{prob:cedpf}\label{prob:edgc}\label{prob:cged}
\begin{enumerate}[topsep=0pt,leftmargin=40pt]
    \item[\textbf{CEDPF}] Cost-expected damage Pareto front: find $\umin_{\poD} \EE(\FA) \subseteq \DD$.
    \item[\textbf{EDgC}] Maximal expected damage given cost constraint: Given $U \in \mathbb{R}_{\geq 0}$, find $d_{\recht{E},\recht{opt}} = \max_{\bx\colon \hc(\bx) \leq U} \hdE(\bx)$.
    \item[\textbf{CgED}]  Minimal cost given expected damage constraint: $L \in \mathbb{R}_{\geq 0}$, find $c_{\recht{E},\recht{opt}} = \min_{\bx\colon \hdE(\bx) \geq L} \hc(\bx)$.
\end{enumerate}
}

\begin{example}
We continue Example \ref{ex:bank4}. Using the definition of $\hd$ from the table in Example \ref{ex:bank2}, we find $\hd_{\recht{E}}(0,1,1) = 0.06 \cdot 0 + 0.54 \cdot 0 + 0.04 \cdot 10 + 0.36 \cdot 310 = 112$.
\end{example}

Solving CDEPF naively is more involved than CDPF: not only do we have to calculate $\hdE(\bx)$ for exponentially many $\bx$, but a single $\hdE(\bx)$ also requires $\Prob(\Yx = \by)\cdot \hd(\by)$ for exponentially many $\by$. Therefore, we introduce new methods to solve CDEPF for treelike ATs in Section \ref{sec:treeprob}, by adapting the deterministic method of Section \ref{sec:treedet} to account for probabilities. For DAG-like ATs, we cannot simply adapt the BILP method of Section \ref{sec:dagdet}, as \eqref{eq:detdag} becomes nonlinear, and CDEPF, EDgC and CgED for DAG-like ATs are left to future work.

\section{Treelike ATs, probabilistic setting} \label{sec:treeprob}

EDgC and CEDPF for treelike ATs can be solved similar to the approach of Section \ref{sec:treedet}. The main difference is that instead of working with the structure function $\struc(\bx,v)$, we work with the \emph{probabilistic structure function} $\kk(\bx,v) := \Prob(\struc(Y_{\bx},v) = 1)$.\label{page:ps} With this notation we can write
\begin{equation*}
\hdE(\bx) = \sum_{v \in N} \kk(\bx,v)\dd(v).
\end{equation*}
Let $v$ be a node with children $v_1,v_2$, and let $\bx \in \mathcal{A}$. Since $T$ is treelike, $v_1$ and $v_2$ do not have shared BASs. Since the truth values of the BASs in $Y_{\bx}$ are independent of each other, this means that the random variables $\struc(Y_{\bx},v_1)$ and $\struc(Y_{\bx},v_2)$ are independent, and so we find
\begin{align}
&\kk(\bx,\OR(v_1,v_2)) \nonumber\\
&\ \ = \kk(\bx,v_1)+\kk(\bx,v_2)-\kk(\bx,v_1)\kk(\bx,v_2), \label{eq:kor}\\
&\kk(\bx,\AND(v_1,v_2)) \nonumber\\
&\ \ = \kk(\bx,v_1)\kk(\bx,v_2). \label{eq:kand}
\end{align}
On the other hand, we can express $\hdE(\bx)$ as
\begin{align}
\hdE(\bx) &= \hdE(\bx_1) + \hdE(\bx_2)+\kk(\bx,v)\dd(v). \label{eq:probtreedamage}
\end{align}
Combining this with \eqref{eq:kor} and \eqref{eq:kand} we can calculate the attributes $\hc$, $\hdE$, $\kk$ of attacks on $v$ from their constituent attacks on $v_1$ and $v_2$. From here, we continue akin to Section \ref{sec:treedet}. More precisely, we consider the \emph{probabilistic attribute triple domain}, which is the poset $(\DP,\poD)$ given by \label{eq:ptrip}$\DP = \Rnn \times \Rnn \times [0,1]$ and $(c,d,p) \poD (c',d',p')$ if and only if $c \leq c', d \geq d' \textrm{ and } p \geq p'$. For every node $v$ we define a set $\CpU(v) \subseteq \Pow(\DP)$ of attribute triples. Just as in the deterministic case, we add the requirement $p \geq p'$ in determining feasibility because a greater activation probability of a node may lead to more damage higher up in the AT. As in Section \ref{sec:treedet}, we define a map $\mU\colon \Pow(\DP) \rightarrow \Pow(\DP)$ by\label{page:umin2} $\mU(X) = \umin\left\{\vvec{c\\d\\p} \in X: c \leq U\right\}$. Define $\star\colon [0,1]^2 \rightarrow [0,1]$ by $p \star p' = p+p'-pp'$. Then we again assume that $T$ is binary, and we define $\CpU(v)$ \label{page:cpu} recursively by
\begin{align}
&\CpU(v) \label{eq:probtreebas}\\
&= \begin{cases}
\left\{\vvec{0\\0\\0},\vvec{\cc(v)\\\pp(v)\dd(v)\\\pp(v)}\right\}, & \textrm{ if $\type(v) = \BAS$ and } \cc(v) \leq U,\\
\left\{\vvec{0\\0\\0}\right\}, & \textrm{ if $\type(v) = \BAS$ and } \cc(v) > U,
\end{cases}  \nonumber\\
&\CpU(\OR(v_1,v_2)) \label{eq:probtreeor}\\
&= \mU\left\{\vvec{c_1 + c_2 \\d_1+d_2+(p_1 \star p_2)\cdot \dd(v) \\p_1 \star p_2} \in \DP\middle| \vvec{c_i\\d_i\\p_i\\} \in \CpU(v_i)\right\},  \nonumber\\
&\CpU(\AND(v_1,v_2)) \label{eq:probtreeand}\\
&= \mU\left\{\vvec{c_1 + c_2 \\d_1+d_2+p_1p_2\dd(v) \\p_1p_2}\in \DP \middle| \vvec{c_i\\d_i\\p_i\\} \in \CpU(v_i)\right\}. \nonumber
\end{align}
Then similar to the results in Section \ref{sec:treedet} one can prove:

\begin{restatable}{theorem}{thmEDgCtree} \label{thm:EDgC-tree}
The solution to EDgC is given by $\max \left\{d \in \Rnn \middle| \vvec{c\\d\\p} \in \CpU(\R{T})\right\}$.
\end{restatable}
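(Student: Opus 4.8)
The plan is to prove, by bottom-up structural induction on $T$, the invariant that for every node $v$,
\[
\CpU(v) = \mU\left\{\vvec{\hc(\bx)\\ \hdE^v(\bx)\\ \kk(\bx,v)} \;\middle|\; \bx \in \BB^{B_v}\right\},
\]
where $\hdE^v(\bx) := \sum_{w \in N_v}\kk(\bx,w)\dd(w)$ is the expected damage accumulated within the sub-AT $T_v$ (with $N_v$ its node set). Since $N_{\R{T}} = N$ gives $\hdE^{\R{T}} = \hdE$, once the invariant is established the theorem follows by reading off the largest damage coordinate at the root. This mirrors the reasoning behind Theorem \ref{thm:DgC-tree}, with the Boolean activation coordinate $\struc(\bx,v)$ replaced by the activation probability $\kk(\bx,v)$.

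For the base case $\type(v) = \BAS$ there are only two attacks: not activating $v$ yields $(0,0,0)^{\intercal}$, while activating it costs $\cc(v)$, reaches $v$ with probability $\pp(v)$, and hence has expected damage $\pp(v)\dd(v)$ and third coordinate $\pp(v)$; this is exactly \eqref{eq:probtreebas}, the cost filter inside $\mU$ accounting for the case $\cc(v) > U$. For the inductive step at $v = \OR(v_1,v_2)$ or $\AND(v_1,v_2)$, I would use that $T$ is treelike, so $B_v = B_{v_1} \sqcup B_{v_2}$ and every attack splits as $\bx = (\bx_1,\bx_2)$. Then cost is additive, $\kk(\bx,v)$ is given by \eqref{eq:kor} resp. \eqref{eq:kand} (using independence of $\struc(\Yx,v_1)$ and $\struc(\Yx,v_2)$), and, because $\kk(\bx,w) = \kk(\bx_i,w)$ for $w \in N_{v_i}$, the expected damage decomposes as in \eqref{eq:probtreedamage}. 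Substituting these identities shows that the combination formulas inside \eqref{eq:probtreeor} and \eqref{eq:probtreeand} produce precisely the triples of the combined attacks $(\bx_1,\bx_2)$.

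The crux, which I expect to be the main obstacle, is showing that applying $\mU$ at every node (rather than only at the root) discards no attack that could become Pareto optimal higher up. For this I would first check that the two combination maps $\oplus\colon \DP \times \DP \to \DP$ appearing in \eqref{eq:probtreeor} and \eqref{eq:probtreeand} are \emph{monotone} for $\poD$: if $t_i \poD t_i'$ for $i = 1,2$, then $t_1 \oplus t_2 \poD t_1' \oplus t_2'$. This reduces to the facts that $p \star p' = p + p' - pp'$ and $(p,p') \mapsto pp'$ are nondecreasing in each argument on $[0,1]$ (immediate, since $\tfrac{\partial}{\partial p}(p \star p') = 1 - p' \geq 0$ and both factors are nonnegative), together with $\dd(v) \geq 0$ for the damage coordinate and additivity of cost. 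Monotonicity, cost-additivity and cost-nonnegativity then yield the compositionality identity $\mU(S_1 \oplus S_2) = \mU\bigl(\mU(S_1) \oplus \mU(S_2)\bigr)$ on finite sets: any feasible combined triple has both child costs at most $U$, each child is $\poD$-dominated by a minimal cost-feasible representative, and monotonicity lifts this domination to the combination while cost-additivity keeps it feasible. Since the recursion applies exactly this operation, the identity propagates the invariant from the children to $v$.

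Finally, to read off EDgC from the invariant at the root, I would argue by two-sided domination. Each $(c,d,p)^{\intercal} \in \CpU(\R{T})$ is realized by some attack $\bx$ with $\hc(\bx) = c \leq U$ and $\hdE(\bx) = d$, so the maximal damage coordinate of $\CpU(\R{T})$ is at most $d_{\recht{E},\recht{opt}}$. Conversely, a maximizer $\bx^{\star}$ of EDgC satisfies $\hc(\bx^{\star}) \leq U$, so its triple lies in the feasible set and is $\poD$-dominated by some element of the Pareto front $\CpU(\R{T})$; domination forces that element's damage coordinate to be at least $\hdE(\bx^{\star}) = d_{\recht{E},\recht{opt}}$, giving the reverse inequality and hence equality.
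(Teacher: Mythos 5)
Your proposal is correct and follows essentially the same route as the paper: your invariant is exactly the paper's Theorem \ref{thm:aux}, your decomposition step is its Lemma \ref{lem:aux1}, your monotonicity-based identity $\mU(S_1 \oplus S_2) = \mU\bigl(\mU(S_1) \oplus \mU(S_2)\bigr)$ packages the exchange properties of its Lemma \ref{lem:aux2} (whose proof likewise rests on $\triangle_d$, $\triangledown_d$ preserving $\poD$), and your final two-sided domination argument at the root is the paper's own proof of Theorem \ref{thm:EDgC-tree}.
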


\begin{restatable}{theorem}{thmCEDPFtree} \label{thm:CEDPF-tree}
The solution to CEDPF is given by $\min \pi(\mathcal{C}_{\infty}^{\recht{P}}(\R{T}))$, where $\pi\colon \DP\rightarrow \DD$ is the projection map onto the first two coefficients.
\end{restatable}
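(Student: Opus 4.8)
The plan is to mirror the proof of Theorem~\ref{thm:CDPF-tree}, replacing the Boolean structure function $\struc$ by the probabilistic one $\kk$ throughout. The heart of the argument is an inductive lemma valid for every $U \in [0,\infty]$ and every node $v$, with $T_v$ the sub-AT rooted at $v$ and $B_v$ its set of BASs:
\[
\CpU(v) = \umin_{\poD}\left\{\vvec{\hc(\bx)\\\hdE(\bx)\\\kk(\bx,v)} \;\middle|\; \bx\in\BB^{B_v},\ \hc(\bx)\le U\right\}.
\]
In words, $\CpU(v)$ is exactly the Pareto front in $(\DP,\poD)$ of the attribute triples of all cost-feasible attacks on $T_v$. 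The theorem is then the $U=\infty$ case at the root, after projecting.

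I would prove the lemma by bottom-up induction on $T_v$; finiteness of $\BB^{B_v}$ guarantees all relevant minima exist and that every element dominates a minimal one. The base case $\type(v)=\BAS$ is immediate from \eqref{eq:probtreebas}: the only attacks are the empty one, with triple $(0,0,0)^{\intercal}$, and the singleton, which succeeds with probability $\pp(v)$, yielding $\kk=\pp(v)$ and expected damage $\pp(v)\dd(v)$; the $c\le U$ filter inside $\mU$ handles the case $\cc(v)>U$. For the inductive step take $v=\OR(v_1,v_2)$ (the AND case is identical with $p\star p'$ replaced by $pp'$). Since $T$ is treelike, $B_{v_1}$ and $B_{v_2}$ are disjoint, so every attack factors as $\bx=(\bx_1,\bx_2)$; by independence the rules \eqref{eq:kor} and \eqref{eq:probtreedamage} together with cost-additivity \eqref{eq:dettreecost} express the triple of $\bx$ as a binary operation $\oplus$ on the triples of $\bx_1,\bx_2$, namely the operation appearing in \eqref{eq:probtreeor}.

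The crucial point — and the step I expect to be the main obstacle — is the monotonicity of $\oplus$ with respect to $\poD$ in each argument. Fixing the second operand $(c_2,d_2,p_2)$ and taking $(c_1,d_1,p_1)\poD(c_1',d_1',p_1')$, the combined cost is monotone by additivity; the combined probability is monotone because $p\star p' = p+p'(1-p)$ and the product $pp'$ are both nondecreasing in each factor on $[0,1]$; and the combined damage $d_1+d_2+(p_1\star p_2)\dd(v)$ is then monotone since $d_1\ge d_1'$, the combined probability has increased, and $\dd(v)\ge 0$. Verifying that the order on the third (probability) coordinate feeds correctly into the damage coordinate through the $\kk(\bx,v)\dd(v)$ term is the only genuinely new bookkeeping relative to the deterministic case, but it reduces to these coordinatewise checks. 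Given monotonicity, the standard identity $\umin_{\poD}(\umin_{\poD}A\oplus\umin_{\poD}B)=\umin_{\poD}(A\oplus B)$ for finite $A,B$ lets me replace the full triple-sets of $T_{v_1},T_{v_2}$ by $\CpU(v_1),\CpU(v_2)$ (valid by the induction hypothesis) before combining. The cost filter commutes with $\oplus$ as well: costs are nonnegative and additive, so $\hc(\bx)\le U$ forces $\hc(\bx_i)\le U$, and no cost-feasible combined attack is lost by pre-filtering the parts. This establishes the lemma.

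Finally I specialize to $U=\infty$ at $v=\R{T}$, so that $\mathcal{C}_{\infty}^{\recht{P}}(\R{T})$ is the $\poD$-Pareto front of the set $S=\{(\hc(\bx),\hdE(\bx),\kk(\bx,\R{T}))^{\intercal}\mid \bx\in\FA\}$. The projection $\pi$ drops the third coordinate, so $\pi(S)=\EE(\FA)$, and $\pi$ is monotone from $(\DP,\poD)$ to $(\DD,\poD)$ since dropping the probability coordinate preserves the inequalities on $c$ and $d$. Applying the same $\umin$-commutation fact to the monotone map $\pi$ gives $\min\pi(\mathcal{C}_{\infty}^{\recht{P}}(\R{T}))=\umin_{\poD}\pi(\umin_{\poD}S)=\umin_{\poD}\pi(S)=\umin_{\poD}\EE(\FA)$, which is precisely the CEDPF. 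The outer $\min$ is genuinely needed, since distinct $\poD$-minimal triples in $\DP$ may project to comparable pairs in $\DD$, and this final reduction removes exactly those.
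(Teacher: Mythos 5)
Your proposal is correct and follows essentially the same route as the paper: the paper's proof rests on exactly your inductive lemma (its Theorem~\ref{thm:aux}, stating $\CpU(v) = \umin H_U(\recht{EA}_v(\mathcal{A}_v))$), proved via the same ingredients — factoring attacks over disjoint BAS sets, monotonicity of the combination operations $\triangle_d$/$\triangledown_d$, commutation of $\umin$ and the cost filter with these operations (its Lemma~\ref{lem:aux2}) — and concludes with the same observation $\umin \circ \pi \circ \umin = \umin \circ \pi$ for the order-preserving projection $\pi$ at $U = \infty$. The only (cosmetic) differences are that you verify the monotonicity of $\oplus$ coordinatewise in more detail than the paper, and state the two-argument $\umin$-commutation identity directly where the paper applies a one-argument version.
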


\begin{figure*} 
\includegraphics[width=18cm]{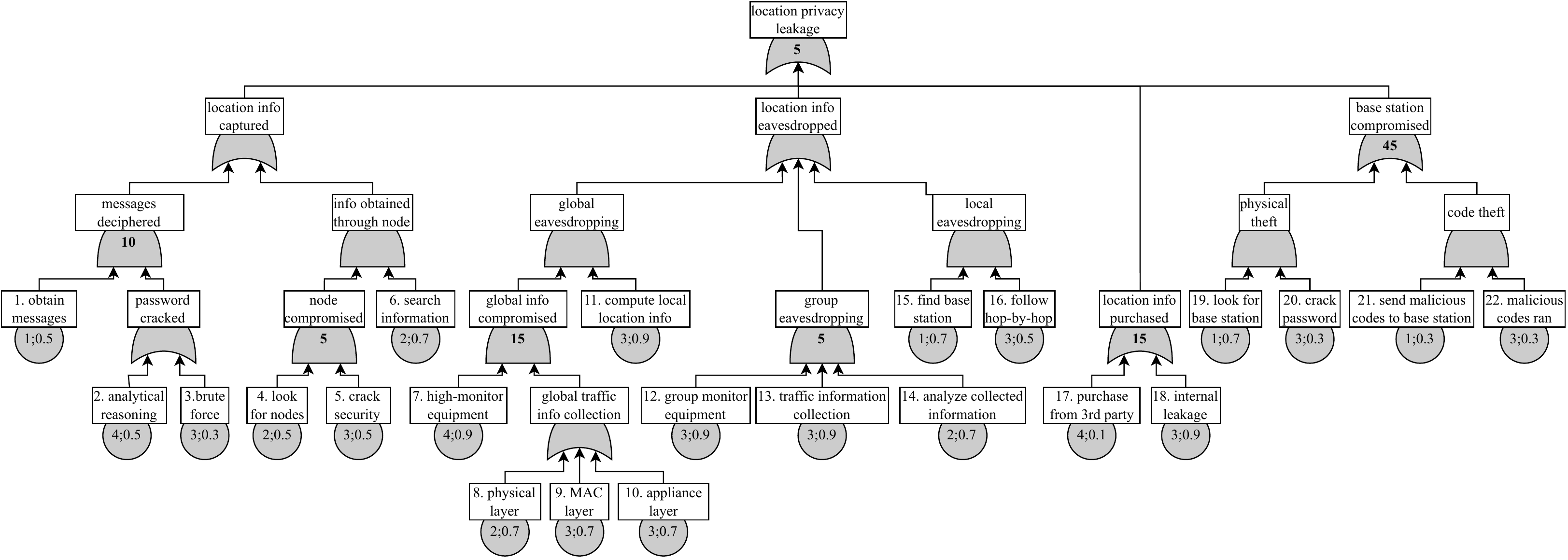}
\caption{Attack tree for privacy attacks on a giant panda preservation IOT monitoring system \cite{jiang2012attack}. Nonzero damage values (in million USD) are in \textbf{bold}, BASs have cost values (unitless) and probabilities inscribed.} \label{fig:panda}
\end{figure*}

In the worst case, the complexity of this approach will be the same as in Section \ref{sec:treedet}; the Pareto frontier can still be of exponential size. Typically, however, $\CpU(v)$ will be larger than $\CU(v)$; in the deterministic model, it is often nonoptimal to add BASs with no damage but with extra costs to an attack, when that attack already activates their parent nodes. However, in the probabilistic model, attempting extra BASs that are not needed in the deterministic model typically leads to a higher probability of activating the parent nodes, giving another way of increasing the cost of an attack to increase its expected damage.

\begin{example} \label{ex:double}
Consider the AT with $w = \R{T} = \OR(v_1,v_2)$, with $\gamma(v_i) = \BAS$, $\cc(v_i) = 1$, $\dd(v_i) = 0$, $\pp(v_i) = 0.5$ for $i = 1,2$, and $\dd(w) = 1$. For $U \geq 2$ the incomplete Pareto fronts $\CU$ and $\CpU$ are given in the table below.

\begin{tabular}{ccc}
Node & $\CU$ & $\CpU$ \\ \hline
$v_1,v_2$ & $\left\{\vvec{0\\0\\0},\vvec{1\\0\\1}\right\}$ & $\left\{\vvec{0\\0\\0},\vvec{1\\0\\0.5}\right\}$\\
$w$ & $\left\{\vvec{0\\0\\0},\vvec{1\\1\\1}\right\}$ & $\left\{\vvec{0\\0\\0},\vvec{1\\0.5\\0.5},\vvec{2\\0.75\\0.75} \right\}$
\end{tabular}
\end{example}

\begin{wrapfigure}[5]{r}{3cm}
\centering
\includegraphics[width=3cm]{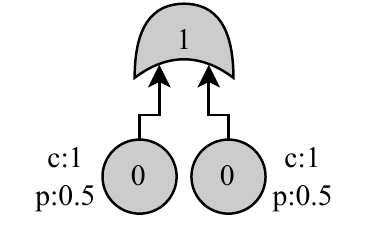}
\end{wrapfigure}
In the deterministic case one $v_i$ suffices to reach $w$, and activating the other comes with extra costs without benefit, which is infeasible. In the probabilistic case attempting both $v_i$ instead comes at the same extra cost, but it increases the expected damage because it increases the probability of $w$ being reached.

For DAG-like ATs in the probabilistic setting one cannot transpose our BILP approach of Section \ref{sec:dagdet}, because the associated equations become nonlinear. For instance, if we introduce a vector $\vec{y} \in [0,1]^N$ where $y_v$ represents $\kk(\mathbf{x},v)$, then for $v = \AND(v_1,v_2)$ we get a constraint $y_v = y_{v_1} \cdot y_{v_2}$, which is nonlinear. In Section \ref{sec:dagdet}, this issue was circumvented because this equation can be linearized if one knows $y_v \in \{0,1\}$, but in general this is not possible. Therefore, we leave CEDPF, CgED and EDgC for DAG-like ATs as an open problem.

\section{Experiments} \label{sec:exp}

\begin{figure}
\includegraphics[width=9cm]{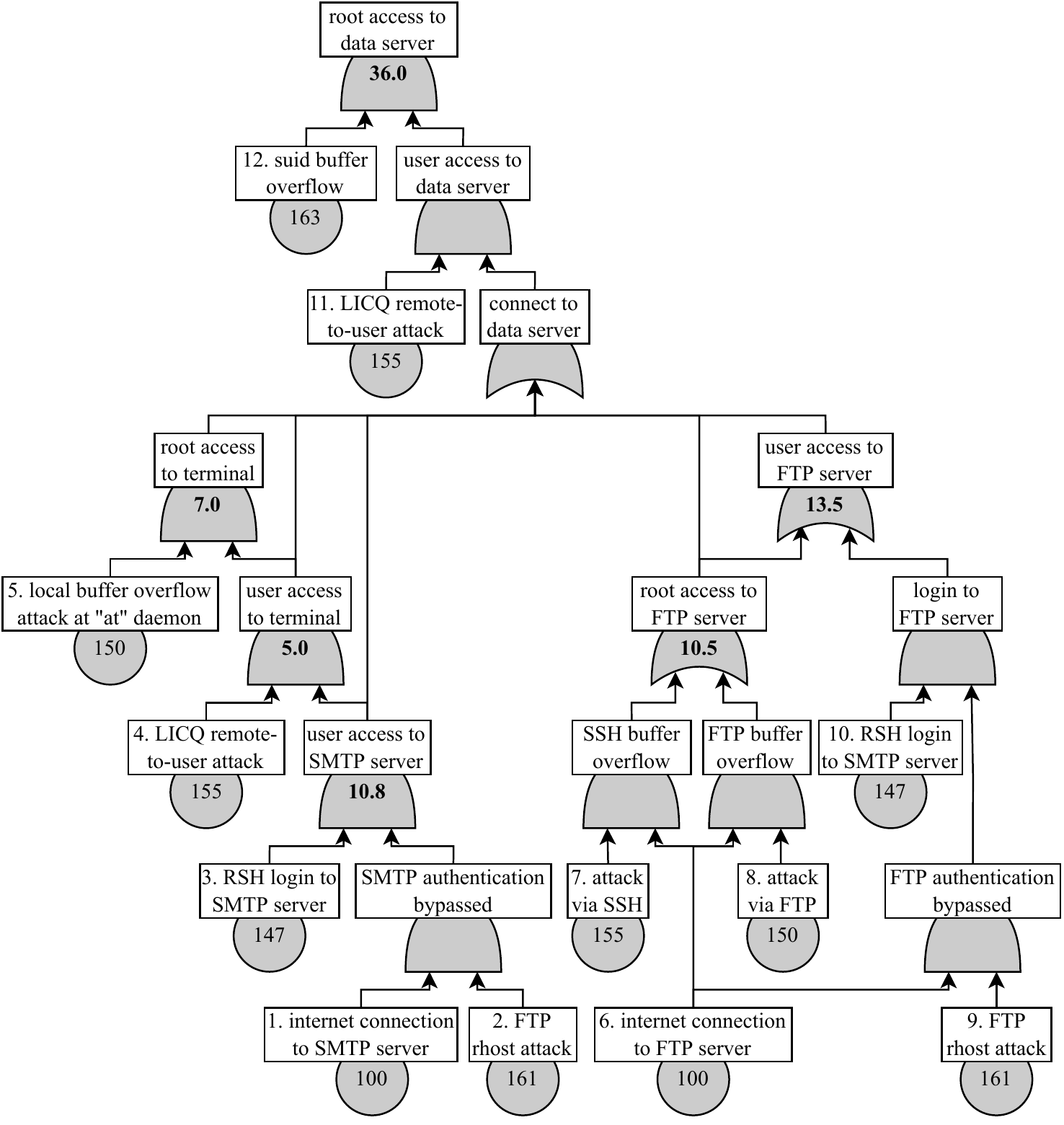}
\caption{Attack tree for a data server on a network \cite{dewri2012optimal}. Nonzero damage values (unitless) are in \textbf{bold}, BASs have cost (in seconds) inscribed.} \label{fig:server}
\end{figure}

\begin{figure}
\begin{subfigure}{0.45\textwidth}
\centering
\begin{tikzpicture}
\begin{axis}[width=7cm,height=4cm,axis lines = left,xmin = 0,xmax = 35, ymin = 0, ymax = 100,ylabel={damage}]
\addplot[color=black!50, mark = none] table[col sep = comma] {CD_panda.csv};
\addplot[color=black, mark = diamond*, only marks] table[col sep = comma] {CD_panda_light.csv};
\end{axis}
\end{tikzpicture}
\begin{tabular}{llccc}
Attack & BASs & cost & damage & {top} \\ \hline 
$A_1$ & $\{b_{18}\}$ & 3 & 20 & {y} \\
$A_2$ & $\{b_{19},b_{20}\}$ or $\{b_{21},b_{22}\}$ & 4 & 50 & {y} \\
$A_3$ & $A_1 \cup A_2$ & 7 & 65 & {y} \\
$A_4$ & $A_3 \cup \{b_1,b_3\}$ & 11 & 75 & {y} \\
$A_5$ & $A_3 \cup \{b_7,b_8\}$ & 13 & 80 & {y} \\
$A_6$ & $A_4 \cup A_5$ & 17 & 90 & {y} \\
$A_7$ & $A_6 \cup \{b_4,b_5\}$ & 22 & 95 & {y} \\
$A_8$ & $A_7 \cup \{b_{11},b_{12},b_{13}\}$ & 30 & 100 & {y}\\
\end{tabular}
\caption{Cost-damage Pareto front for Fig. \ref{fig:panda}.} \label{fig:cdpanda}
\end{subfigure}
\begin{subfigure}{0.45\textwidth}
\centering
\begin{tikzpicture}
\begin{axis}[width=7cm,height=4cm,axis lines = left,xmin = 0,xmax = 60, ymin = 0, ymax = 60,ylabel={expected damage}]
\addplot[color=black!50, mark = none] table[col sep = comma] {CDP_panda.csv};
\addplot[color=black, mark = diamond*, only marks] table[col sep = comma] {CDP_panda_light.csv};
\end{axis}
\end{tikzpicture}
\begin{tabular}{llccc}
Attack & BASs & cost & damage & {top} \\ \hline
$A_1$ & $\{b_{18}\}$ & 3 & 18.0 & {y} \\
$A_2$ & $A_1 \cup \{b_{19},b_{20}\}$ & 7 & 27.6 & {y} \\
$A_3$ & $A_2 \cup \{b_{21},b_{22}\}$ & 11 & 30.8 & {y} \\
$A_4$ & $A_2 \cup \{b_7,b_8\}$ & 13 & 37.0 & {y} \\
$A_5$ & $A_4 \cup \{b_9\}$ & 16 & 39.8 & {y} \\
$\vdots$ & $\vdots$ & $\vdots$ & $\vdots$ \\ 
\end{tabular}
\caption{Cost-expected damage Pareto front for Fig. 
\ref{fig:panda}}
\end{subfigure}
\begin{subfigure}{0.45\textwidth}
\centering
\begin{tikzpicture}
\begin{axis}[width=7cm,height=4cm,axis lines = left,xmin = 0,xmax = 1400, ymin = 0, ymax = 100,ylabel={damage}]
\addplot[color=black!50, mark = none] table[col sep = comma] {CD_server.csv};
\addplot[color=black, mark = diamond*, only marks] table[col sep = comma] {CD_server_light.csv};
\end{axis}
\end{tikzpicture}
\begin{tabular}{llccc}
Attack & BASs & cost & damage & {top} \\ \hline 
$A_1$ & $\{b_6,b_8\}$ & 250 & 24 & {n} \\
$A_2$ & $A_1 \cup \{b_{11},b_{12}\}$ & 568 & 60 & {y} \\
$A_3$ & $A_2 \cup \{b_1,b_2,b_3\}$ & 976 & 70.8 & {y} \\
$A_4$ & $A_3 \cup \{b_4\}$ & 1131 & 75.8 & {y} \\
$A_5$ & $A_4 \cup \{b_5\}$ & 1281 & 82.8 & {y}
\end{tabular}
\caption{Cost-damage Pareto front for Fig. \ref{fig:server}.} \label{fig:cdserver}
\end{subfigure}
\caption{Pareto fronts for the example ATs, together with the corresponding attacks as subsets of $B$. {Except for $A_1$ of (c) all optimal attacks reach the top node.}}
\end{figure}

We tested the validity of our methods by executing them on two established ATs from the literature; these model the attacks on private information of valuable assets in a wireles sensor network \cite{jiang2012attack} and on a data server on a network behind a firewall \cite{dewri2012optimal}. {We also evaluate computation time on a suite of randomly generated ATs. As discussed in Section \ref{sec:rel}, existing approaches cannot be applied to solve the Cg(E)D, (E)DgC and C(E)DPF problems; instead, we compare computation time to an enumerative method that goes through all attacks to find the Pareto optimal ones.}

The methods are implemented in Matlab and executed on PC with an Intel  Core i7-10750HQ 2.8GHz processor and 16GB memory. The source code can be found at \cite{code}. The BILP problems are solved by translating them into single-objective problems via the methods of \cite{ozlen2009multi} in the YALMIP environment \cite{yalmip}, which translates them into the Gurobi solver \cite{gurobi}, a state-of-the-art optimizer that can handle ILP problems.
We find that our methods compute C(E)DPF considerably faster than the naive method, and that the resulting Pareto front provides valuable insight into the weak points of the system.

\subsection{IoT sensor network for wildlife observation}

The first AT \cite{jiang2012attack} is treelike (Fig.~\ref{fig:panda}). It {shows} attacks on a wireless {IoT} sensor network that have the goal of obtaining the location information of valuable assets; in this case, giant pandas in a reservation in China \cite{jiang2012attack}. The costs of BASs are given in \cite{jiang2012attack} as unitless values 1--5. Detection probability is also given as a value 1--5; we take this as the BAS's success probability by converting it to a value 0.1--0.9. The work \cite{jiang2012attack} does not contain damage values; instead, we estimate these from the economic value of giant pandas and the average reservation size \cite{wei2018value}. The top event (the location information of one giant panda) only does minor damage compared to some of the internal nodes; {e.g}, if the base station is compromised, {all pandas'} location information is leaked.

On this AT, we first disregard probability and calculate the cost-damage Pareto front bottom-up via Theorem \ref{thm:CDPF-tree}. The resulting Pareto front is shown in Fig. \ref{fig:cdpanda}, and the corresponding Pareto-optimal attacks are listed as subsets of $B$ (where $b_i$ is the BAS numbered $i$ in Fig. \ref{fig:panda}). As we can see, only a few of the $2^{22}$ possible attacks are Pareto optimal. Furthermore, every optimal attack contains at least one of the minimal attacks $\{b_{18}\}, \{b_{19},b_{20}\}$ and $\{b_{21},b_{22}\}$, and many contain two of them. These three minimal attacks do a lot of damage at relatively small cost; indeed, after these the curve tapers off, and extra cost beyond this has less damage impact. Thus, these attacks require the most defense, {and security improvements should focus on location information leakage by internal sources ($b_{18}$) and base station compromise by either physical theft ($b_{19},b_{20}$) or code theft ($b_{21},b_{22}$). After defenses are put in place, a new cost-damage analysis is needed to see whether attack risks have been mitigated satisfactorily.}

We also calculate the cost-expected damage Pareto front via Theorem \ref{thm:CEDPF-tree}. It has 31 Pareto-optimal attacks; this increase compared to the deterministic situation comes from the fact that in the probabilistic case it is beneficial to activate multiple children of an OR-gate, as in Example \ref{ex:double}. Again the attack $\{b_{18}\}$ is Pareto-optimal at $(3,18)$; however, $\{b_{19},b_{20}\}$ and $\{b_{21},b_{22}\}$ have expected damage $10.5$ and $4.5$, respectively, and at cost $4$ are no longer Pareto-optimal. Instead, the next Pareto-optimal attack is $\{b_{18},b_{19},b_{20}\}$, which targets two valuable low-level nodes. In this probabilistic setting, we see that {internal local information leakage ($b_{18}$)} is part of every Pareto-optimal attack, which suggests this is the most important attack to defend against.

\subsection{Data server on a network}

The second AT we consider represents the attack on a data server through a firewall using known exploits \cite{dewri2012optimal}. Since is DAG-like we only consider the deterministic case. The damage values are from \cite{dewri2012optimal} and represent unitless composites aggregating lost revenue, non-productive downtime, damage recovery, public embarassment, and law penalty. The cost is measured in time spent by the attacker, and the values are taken from \cite{zhao2011hybrid}, where the time taken by similar attacks is modeled via exponential distributions; we take the expected value as each node's duration. The rates in \cite{zhao2011hybrid} are unitless, so we assume they are in $\tfrac{1}{100\recht{s}}$; this does not affect the Pareto front except for scaling. We have slightly changed the AT compared to \cite{dewri2012optimal} as the presentation there focused on vulnerabilities rather than attacks. Note that some nodes, such as $
\mathtt{UserAccessToTerminal}$, are superfluous if one only cares about activating the top node since they require $\mathtt{UserAccessToSMTPServer}$, but they do play a role in cost-damage analysis since they carry damage values.

The results are depicted in Fig. \ref{fig:cdserver}. There are 5 nonzero Pareto-optimal attacks. Furthermore, every Pareto-optimal attack contains the previous one. This implies that {FTP buffer overflow attacks on the FTP server ($b_6,b_8$)} are the most important BASs to defend against, followed by $b_{11}$ and $b_{12}$, etc. Note that of these Pareto optimal attacks only $A_2$ would have been found by a minimal attack analysis.

\subsection{Computation time{: Case studies}}

We also measure the computation time of both our bottom-up and BILP methods for our analyses where applicable. We also compare it to an enumerative approach in which we calculate the cost and damage for each possible attack, and keep only the Pareto-optimal ones. For Fig. \ref{fig:panda}, this amounts to $2^{22} \approx 4\cdot 10^6$ attacks. {The bottom-up method is about $10\times$ as fast as BILP, and both outperform the enumerative method by an enormous margin, especially in the larger Fig.~\ref{fig:panda}.}

To check the robustness of our timing results, we also evaluate our methods on the same ATs, but with random $\cc,\dd,\pp$ values on each node {($\cc(v) \in \{1,\ldots,10\}$, $\dd(v) \in \{0,\ldots,10\}$, $\pp(v) \in \{0.1,0.2,\ldots,1.0\}$)}. The average computation time and standard deviations are given in Table \ref{tab:time}. For the bottom-up methods, the results conform to our earlier results, but \emph{BILP is} slower; this may be because the random ATs contain considerably more nonzero values. {The enumerative method is skipped because it is a lot slower than our new approaches; we compare it to our methods more comprehensively below.}

{
\subsection{Computation time: Randomly generated ATs}

We also apply our methods to a suite of ATs, randomly generated through a method adapted from \cite{lopuhaa2021attack}. More specifically, we generate ATs by taking literature ATs (see Table \ref{tab:DATs}) and combining them in one of the three following ways (see \cite{lopuhaa2021attack}): 
\begin{enumerate}
\item We take a random BAS from the first AT and replace it with the root of the second AT, thus joining the two ATs;
\item We give the roots of the two ATs a common parent with a random type;
\item Same as the previous, but we also identify two randomly chosen BASs, one from each AT. 
\end{enumerate}
For each integer $1 \leq n \leq 100$, we combine ATs from Table \ref{tab:DATs} via a method randomly drawn from the three above until the resulting AT satisfies $|N| \geq n$. We do this five times for each $n$, yielding a testing suite $\mathcal{T}_{\recht{DAG}}$ of 500 DATs, with random $\cc,\dd,\pp$ as above. To test our bottom-up methods,  we also create a suite $\mathcal{T}_{\recht{tree}}$ of treelike ATs, using the first two combining methods above and only the treelike ATs from Table \ref{tab:DATs}.

We evaluate computation times and average the results in groups of ATs grouped by $\lfloor N \rfloor/10$; see Fig.~\ref{fig:lines}. We only evaluated the enumerative method for the first 3 groups. Again, BU is faster than BILP, and both are considerably faster than the enumerative approach. For large ATs probabilistic BU is slower than deterministic BU, which is not yet seen from the case study ($N = 38$). This is probably because not only there are exponentially many attacks to consider, each attack also considers exponentially many actualized attacks to calculate expected damage, see Example \ref{ex:bank4}.

%

\begin{table*}[t]
\centering
\begin{tabular}{l|ccc|ccc}
& \multicolumn{3}{c|}{True $\recht{c},\recht{d}$, ($\recht{p}$)} & \multicolumn{3}{c}{Random $\recht{c},\recht{d}$, ($\recht{p}$)} \\
AT & time (BU) & time (BILP) & time (enumerative) & time (BU) & time (BILP) & time (enumerative) \\ \hline
Fig. \ref{fig:panda} deterministic & 0.044s & 0.438s & {34h} & {0.037s$\pm$0.004s} & {3.144s$\pm$0.526s} & \\
Fig. \ref{fig:panda} probabilistic & 0.047s & n/a & {49h}& {0.051s$\pm$0.012s} & n/a & \\
Fig. \ref{fig:server} deterministic & n/a &  0.380s & 79.53s & n/a &  {1.558s$\pm$0.252s} & {$84.19$s$\pm$4.79s}
\end{tabular}
\caption{{Computation time for} C(E)DPF for the given ATs using bottom-up methods (Theorems \ref{thm:CDPF-tree} \& \ref{thm:CEDPF-tree}), BILP (Theorem \ref{thm:bilp}) and enumerative methods for their given $\cc,\dd,\pp$ values, and average and standard deviations over $100$ random $\cc,\dd,\pp$ values.} \label{tab:time}
\end{table*}

\begin{table}[t]
\centering 
\begin{tabular}{ccc|ccc}
Source & $|N|$ & treelike & Source & $|N|$ & treelike\\
\hline
     \cite{kumar2015quantitative} Fig. 1 & 12 & no & \cite{Arnold2014} Fig. 3 & 8 & yes \\
     \cite{kumar2015quantitative} Fig. 8 & 20 & no & \cite{Arnold2014} Fig. 5 & 21 & yes \\
     \cite{kumar2015quantitative} Fig. 9 & 12 & no & \cite{Arnold2014} Fig. 7 & 25 & yes \\
     \cite{AGKS15} Fig. 1 & 16 & no & \cite{Fraile2016} Fig. 2 & 20 & yes \\
&&& \cite{Kordy2018} Fig. 1 & 15 & yes 
\end{tabular}
\caption{ATs from the literature used as building blocks. The trees from \cite{Fraile2016} and \cite{Kordy2018} are attack-defense trees; only the root component of the attack part was used for these trees.} \label{tab:DATs}
\end{table}

}

\pgfplotstableread{
X Y
0.5000   -0.3982
1.5000   1.7218
2.5000   2.8236
}{\mytablenaive}

\pgfplotstableread{
X Y
0.5 -1.2321
1.5 -1.0257
2.5 -0.9683
3.5 -1.2160
4.5 -1.1846
5.5 -1.1283
6.5 -0.9645
7.5 -0.8010
8.5 -0.6318
9.5 -0.5560
10.5 -0.3959
11.5 -0.2201
}{\mytableBU}

\pgfplotstableread{
X Y
0.5 0.009
1.5  0.3796
2.5  0.5246
3.5  0.6916
4.5  0.8037
5.5  0.9421
6.5  1.0441
7.5  1.1293
8.5 1.2262 
9.5  1.2821
10.5 1.4043
11.5  1.5486
}{\mytableBILP}

\pgfplotstableread{
X Y
0.5000   -0.1379
1.5000   1.7305
2.5000   2.8243
}{\mytablenaiveP}

\pgfplotstableread{
X Y
0.5 -1.7392
1.5 -1.4650
2.5 -1.3802
3.5 -0.9105
4.5 -0.6525
5.5 -0.0813
6.5 0.5714
7.5 1.3261
8.5 1.6097
9.5 2.0278
10.5 2.3851
11.5 2.6109
}{\mytableBUP}

\pgfplotstableread{
X Y
0.5000   -0.1627
1.5000   1.4653
2.5000   2.8597
}{\mytablenaiveDAG}

\pgfplotstableread{
X Y
0.5 -0.0580
1.5  0.3444
2.5  0.5880
3.5  0.8043
4.5  0.8532
5.5  0.9549
6.5  1.0577
7.5  1.1584
8.5 1.2273
9.5  1.2871
10.5 1.3539
11.5  1.4847
}{\mytableBILPDAG}

\begin{figure}[t]
\centering
\begin{subfigure}{0.45\textwidth}
\centering
\begin{tikzpicture}
\begin{axis}[  
width = 8cm,
height = 4cm,
    legend style={legend pos = north east,legend columns = 3},     
    ymin = -2,
    ymax = 3,
    ytick={-2,-1,0,1,2,3},
    yticklabels={$10^{-2}$,$10^{-1}$,$10^0$,$10^1$,$10^2$,$10^3$},
    ylabel near ticks,
    ylabel={mean time (s)}, 
    xmin = 0,
    xmax = 12,
    xtick={0,2,4,6,8,10,12},  
    xticklabels = {8,20,40,60,80,100,121},
    nodes near coords align={vertical},  
    ] 
\addplot
  plot[color=blue,mark=triangle*,mark options = {fill=blue}]  table [x=X,y=Y] {\mytablenaive};
 \addplot
  plot[color=red,mark=square*,mark options = {fill=red}] table [x=X,y=Y] {\mytableBU};
\addplot
  plot[color=brown,mark=*,mark options = {fill=brown}] table [x=X,y=Y] {\mytableBILP};
 \legend{Enum,BU,BILP}  
\end{axis} 
\end{tikzpicture}
\caption{$\mathcal{T}_{\recht{tree}}$ deterministic}
\end{subfigure}
\begin{subfigure}{0.45\textwidth}
\centering
\begin{tikzpicture}
\begin{axis}[  
width = 8cm,
height = 4cm,
    legend style={legend pos = south east,legend columns = 3},     
    ymin = -2,
    ymax = 3,
    ytick={-2,-1,0,1,2,3},
    yticklabels={$10^{-2}$,$10^{-1}$,$10^0$,$10^1$,$10^2$,$10^3$},
    ylabel near ticks,
    ylabel={mean time (s)}, 
    xmin = 0,
    xmax = 12,
    xtick={0,2,4,6,8,10,12},  
    xticklabels = {8,20,40,60,80,100,121},
    nodes near coords align={vertical},  
    ] 
\addplot
  plot[color=blue,mark=triangle*,mark options = {fill=blue}]  table [x=X,y=Y] {\mytablenaiveP};
 \addplot
  plot[color=red,mark=square*,mark options = {fill=red}] table [x=X,y=Y] {\mytableBUP};
 \legend{Enum,BU}  
\end{axis} 
\end{tikzpicture}
\caption{$\mathcal{T}_{\recht{tree}}$ probabilistic}
\end{subfigure}
\begin{subfigure}{0.45\textwidth}
\centering
\begin{tikzpicture}
\begin{axis}[  
width = 8cm,
height = 4cm,
    legend style={legend pos = south east,legend columns = 3},     
    ymin = -2,
    ymax = 3,
    ytick={-2,-1,0,1,2,3},
    yticklabels={$10^{-2}$,$10^{-1}$,$10^0$,$10^1$,$10^2$,$10^3$},
    ylabel near ticks,
    ylabel={mean time (s)}, 
    xmin = 0,
    xmax = 12,
    xtick={0,2,4,6,8,10,12},  
    xticklabels = {8,20,40,60,80,100,115},
    nodes near coords align={vertical},  
    ] 
\addplot
  plot[color=blue,mark=triangle*,mark options = {fill=blue}]  table [x=X,y=Y] {\mytablenaiveDAG};
 \addplot
  plot[color=brown,mark=*,mark options = {fill=brown}] table [x=X,y=Y] {\mytableBILPDAG};
 \legend{Enum,BILP}  
\end{axis} 
\end{tikzpicture}
\caption{$\mathcal{T}_{\recht{DAG}}$ deterministic}
\end{subfigure}
\begin{subfigure}{0.45\textwidth}
\centering
\begin{tabular}{ll|ccc}
 & & BU & BILP & enum${}^*$ \\ \hline
\multirow{3}{*}{$\mathcal{T}_{\recht{tree}}$ det.} & min & $<$0.01s & 0.563s & 0.266s \\
& mean & 0.159s & 11.12s & 86.87s \\
& max & 1.141s & 44.86s & 3917s \\ \hline
\multirow{3}{*}{$\mathcal{T}_{\recht{tree}}$ prob.} & min & $<$0.01s &  & 0.313s \\
& mean & 42.30s &  & 426.0s\\
& max & 1335s &  & 3853s\\ \hline
\multirow{3}{*}{$\mathcal{T}_{\recht{DAG}}$ prob.} & min & & 0.781s & 0.313s \\
& mean & & 11.09s & 296.3s \\
& max & &  50.08s & 5619s \\
\end{tabular}
\caption{Overall statistics. ${}^*$Only ATs with $N < 30$.}
\end{subfigure}
\caption{Computation time on randomly generated ATs. Means over subsets grouped by $\lfloor N/10 \rfloor$.}\label{fig:lines}
\end{figure}

\section{Conclusion}

This paper introduced two novel methods to solve cost-damage problems for attack trees, both by optimizing damage (resp. cost) under a cost (resp. damage) constraint, and by calculating the cost-damage Pareto front. For treelike ATs, this is done via bottom-up methods, both in the deterministic and the probabilistic case. For DAG-like ATs in the deterministic case, we introduce a method based on integer linear programming.

There are multiple avenues for further research. An obvious one is the probabilistic case for DAG-like ATs, which is not discussed in this paper. One approach would be to use a bottom-up approach, but in a polynomial ring with formal variables for nodes that occur multiple times, rather than in the real numbers. In that way, one can keep track of which nodes occur twice, and tweak addition to prevent double counting. Another extension is to compare the formal, provably optimal approach presented in this paper with a genetic algorithm approach to multiobjective optimization to approximate the Pareto front \cite{ali2020quality}. From experiments it could be established to what extent the performance gain (if any) from using genetic algorithms comes at an accuracy cost. Finally, the cost and damage values may not be precisely known, but carry some uncertainty. A more elaborate analysis can incorporate this uncertainty, for example using fuzzy numbers, to obtain a robust version of the cost-damage Pareto front.

\printbibliography

\appendix

\subsection{Proof of Theorem \ref{thm:increasing}} \label{app:increasing}

\thmincreasing*

\begin{proof}
 Define $n = |X|$. Let $f\colon \mathbb{B}^X \rightarrow \mathbb{R}_{\geq 0}$ be a nondecreasing function, and let $\bx^1,\ldots,\bx^{2^{n}}$ be an ordering of $\mathbb{B}^n$ such that $f(\bx^i) \leq f(\bx^{i+1})$ for all $i < 2^n$, and such that $\bx^i \preceq \bx^j$ implies $i \leq j$ for all $i$ and $j$. Since $f$ is nondecreasing, these two conditions can be fulfilled simultaneously. Furthermore, define an AT $T$ by having $B = X$ and non-leaf nodes $\{A_i,O_i\}_{i\leq 2^{n}}$ and $\R{T}$ given by
 \begin{align*}
 A_i &= \AND(\{v \in B | x^i_v = 1\}),\\
 O_j &= \OR(\{A_i | i \geq j\}),\\
 \R{T} &= \AND(\{O_j \mid j \leq 2^{n}\}).
 \end{align*}
 Furthermore, we define $d\colon N \rightarrow \mathbb{R}_{\geq 0}$ by
 \begin{align*}
 \dd(v) &= 0 & \forall v \in B,\\
 \dd(A_i) &= 0 & \forall i \leq 2^n,\\
 \dd(O_1) &= f(\bx^1),\\
 \dd(O_{j+1}) &= f(\bx^{j+1})-f(\bx^j) & \forall j < 2^n,\\
 \dd(\recht{R}_T) &= 0.
 \end{align*}
 Note that $\recht{R}_T$ does not play a role in the cost-damage analysis and is here purely to satisfy the condition of $T$ having a root. Now consider an attack $\bx_i$; then $\struc(\bx^i,A_j) = 1$ if and only if $\bx^j \preceq \bx^i$. It follows that $\struc(\bx^i,O_j) = 1$ if and only if there is an $\bx^k$ with $j \leq k$ and $\bx^j \preceq \bx^i$. The latter can only happen when $j \leq i$, and so $\struc(\bx^i,O_j) = 1$ if and only if $j \leq i$. It follows that
 \begin{equation*}
 \hd(\bx^i) = \sum_{j \leq i} \dd(O_j) = f(\bx^i)
 \end{equation*}
 and since this holds for all $i$, we have $f = \hd$.
 \end{proof}

\subsection{Proof of Theorem \ref{thm:npcomplete}}

\thmnpcomplete*

\begin{proof}
{First, we show that CDDP is in NP. A witness of a CDDP problem determined by $(T,\cc,\dd,U,L)$ is given by an attack $\bx \in A$; to verify this we need to calculate $\hc(\bx)$ and $\hd(\bx)$. By Definition \ref{def:cd}, the former can be calculated in $\mathcal{O}(|B|)$ time. The latter can be calculated in $\mathcal{O}(|N|+|E|)$ time, as calculating $\struc{}(\bx,v)$ for all $v$ takes $\mathcal{O}(|N|+|E|)$ time via Definition \ref{def:sf}. We conclude that CDDP is in NP.}

Consider a binary knapsack decision problem, i.e. two linear functions $f,g\colon \BB^n \rightarrow \Rnn$, an upper bound $U$ and a lower bound $L$; the problem is to determine whether there exists a $\bx \in \BB^n$ such that $f(\bx) \geq L$ and $g(\bx) \leq U$. This problem is known to be NP-complete \cite{garey1979computers}, and we prove that the cost-damage decision problem by transforming the binary knapsack decision problem into it.

Since $f$ and $g$ are linear and their images lie in $\mathbb{R}_{\geq 0}$, there exist $f_1,\ldots,f_n,g_1,\ldots,g_n \in \Rnn$ such that $f(\bx) = \sum_{i=1}^n f_ix_i$ and $g(\bx) = \sum_{i=1}^n f_ix_i$. Now define a cd-AT $(T,\cc,\dd)$ by taking $N = \{v_1,\ldots,v_n,\R{T}\}$, with the $v_i$ BASs and $\R{T} = \AND(v_1,\ldots,v_n)$. Furthermore, take $\cc(v_i) = g_i$, and $\dd(v_i) = f_i$ and $\dd(\R{T}) = 0$. Then $\hc(\bx) = g(\bx)$ and $\hd(\bx) = f(\bx)$, and so a solution to the cost-damage decision problem is a solution to the binary knapsack decision problem. Since $(T,\cc,\dd)$ is polynomial (even linear) in the size of the original knapsack problem, this shows that the cost-damage decision problem is NP-complete.
 \end{proof}

\subsection{Proof of Theorem \ref{thm:complexity}} \label{app:complexity}

Before we prove the theorem, we first consider the following auxiliary lemma.

\begin{lemma} \label{lem:compaux}
 Let $T = (N,E)$ be a binary tree, i.e., $\ch(v) \in \{0,2\}$ for all $v \in N$. For a non-leaf node $v$, let $b(v)$ be the number of leaf descendants of $v$ (with edges pointing away from the root). Let $v_1,\ldots,v_K$ be an enumeration of the non-leaf nodes of $T$ such that $b(v_i) \leq b(v_j)$ whenever $i \leq j$. Then $b(v_i) \leq i+1$.
 \end{lemma}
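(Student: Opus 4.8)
The plan is to avoid induction entirely and argue by a direct counting argument inside the subtree rooted at $v_i$. The key structural fact I would invoke is that a full binary tree (one in which every internal node has exactly two children) with $\ell$ leaves has exactly $\ell - 1$ internal nodes. This follows from a standard edge count: each internal node contributes exactly two edges to its children, so the number of edges equals twice the number of internal nodes, while a tree always has one fewer edge than it has nodes; equating these gives $2I = (I + \ell) - 1$, i.e. $I = \ell - 1$. Applying this to the subtree rooted at $v_i$, which is itself full binary with $b(v_i)$ leaves, the subtree contains exactly $b(v_i) - 1$ internal nodes; since one of them is $v_i$ itself, the node $v_i$ has exactly $b(v_i) - 2$ proper internal descendants.

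Next I would show that every proper internal descendant $w$ of $v_i$ satisfies the strict inequality $b(w) < b(v_i)$. The reason is that the leaf descendants of $w$ form a \emph{proper} subset of those of $v_i$: the path from $v_i$ down to $w$ enters exactly one of the two child subtrees of $v_i$, while the other child subtree of $v_i$ necessarily contributes at least one leaf that lies outside the subtree of $w$. This strictness is precisely where the fullness hypothesis is used, and it is what drives the whole argument.

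Finally I would combine the two observations with the ordering property of the enumeration. Since the sequence $b(v_1) \le b(v_2) \le \cdots$ is non-decreasing, any node $w = v_m$ with $b(v_m) < b(v_i)$ must satisfy $m < i$; hence all $b(v_i) - 2$ proper internal descendants of $v_i$ occur among the distinct nodes $\{v_1, \ldots, v_{i-1}\}$. Their number is therefore at most $i - 1$, giving $b(v_i) - 2 \le i - 1$, that is, $b(v_i) \le i + 1$, as required.

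The main obstacle I anticipate is making the two order/counting steps airtight rather than hand-wavy. One must verify that the strict inequality $b(w) < b(v_i)$ genuinely holds for \emph{every} proper internal descendant (it would fail if a node could have a single child, so the fullness assumption cannot be dropped), and one must handle ties in the $b$-values correctly when arguing that strict inequality plus a non-decreasing ordering forces the index of $w$ to be below $i$. Once these points are settled, the concluding arithmetic $b(v_i) - 2 \le i - 1$ is immediate.
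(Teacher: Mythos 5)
Your proof is correct, and it takes a genuinely different route from the paper. The paper argues by induction on $|N|$: it splits $T$ at the root into the two child subtrees $T_1,T_2$, restricts the enumeration to each subtree (noting the restricted sequence still satisfies the ordering hypothesis), applies the induction hypothesis to get $b(v_i)\leq r+1\leq i+1$ for nodes inside a subtree, and handles the root separately via $K=|B|-1$. You instead give a direct, non-inductive counting argument: the subtree rooted at $v_i$ is itself full binary with $b(v_i)$ leaves and hence exactly $b(v_i)-1$ internal nodes, so $v_i$ has $b(v_i)-2$ proper internal descendants; each such descendant $w$ satisfies $b(w)<b(v_i)$ (strictness coming from the disjointness of the two child subtrees and the fact that each contributes at least one leaf), so by the non-decreasing ordering all of them occur among $v_1,\ldots,v_{i-1}$, yielding $b(v_i)-2\leq i-1$. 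All three steps — the $I=\ell-1$ edge count, the strict inequality for proper internal descendants, and the index argument — are sound, including the tie-handling (strict inequality in $b$ forces a strictly smaller index under a non-decreasing enumeration). What each approach buys: yours localizes the whole argument at $v_i$ and makes the mechanism transparent (the bound is exactly "number of earlier slots must accommodate all internal proper descendants", and it is visibly tight when $v_i$'s predecessors are precisely its descendants), whereas the paper's induction avoids invoking the internal-node count for full binary trees and recycles the same subtree-restriction pattern it uses elsewhere; your version is arguably shorter and self-contained modulo one standard fact.
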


\begin{proof}
 We prove this by induction on $|N|$; it is clearly true if $|N| = 3$, where the single internal node, the root, has two leaves as children. Let $T$ be a full binary tree with $|N| > 3$, and let $\R{T}$ be its root; let $a_1,a_2$ be its children. Furthermore, let $T_i = (N_i,E_i)$ be the subtree consisting of $a_i$ and its descendants. Let $v_1,\ldots,v_K$ be an enumeration of the non-leaf nodes of $N$ such that $b(v_i) \leq b(v_j)$ whenever $i \leq j$. Let $i \leq K$; we aim to prove that $b(v_i) \leq i+1$. If $v_i \in N_1$, define $r := |\{j \leq i \mid v_j \in N_1\}|$. Then $r \leq i$. On the other hand, if we restrict the sequence $v_1,\ldots,v_K$ to just elements of $N_1$, the resulting sequence satisfies the conditions of the Lemma for $T_1$, and $v_i$ is the $r$-th element in this sequence. It follows from the induction hypothesis that $b(v_i) \leq r+1$; hence certainly $b(v_i) \leq i+1$. The case that $v_i \in N_2$ is handled similarly. The last remaining case is $v_i = \R{T}$; but this happens when $i=K$ as $\R{T}$ is necessarily the last element of the sequence. Since $T$ is a binary tree, one has $K = |B|-1$, from which this case also follows.
 \end{proof}

\thmcomplexity*

\begin{proof}
 At a node $v$ with children $v_1$, $v_2$, we have to do at most $|\CU(v_1)|\cdot|\CU(v_2)|$ computations. We also have $|\CU(v)| \leq |\CU(v_1)|\cdot|\CU(v_2)|$. Since $|\CU(v)| \leq 2$ if $v \in B$, one straightforwardly proves by induction that for inner nodes one has $|\CU(v)| \leq 2^{b(v)}$, where $b(v)$ is the number of BAS descendants of $v$ as in Lemma \ref{lem:compaux}. It also follows that at $v$ we have to do at most $2^{b(v)}$ computations. Let $v_1,\ldots,v_{|B|-1}$ be an enumeration of $N \setminus B$ such that $b(v_i) \leq b(v_j)$ whenever $v_i \leq v_j$; then the total number of computations is equal to
 \begin{equation}
 \sum_{i=1}^{|B|-1} 2^{b(v)} \leq \sum_{i=1}^{|B|-1} 2^{i+1} = 2^{|B|+1}-2,
 \end{equation}
 where the inequality follows from Lemma \ref{lem:compaux}. This shows the statement about complexity. The fact that this cannot be improved for CDPF follows from Example \ref{ex:complexity}, which shows that the Pareto front can be of size $2^{|B|}$; in particular, outputting it takes at least that much time.
 \end{proof}

\subsection{Proofs of Theorems \ref{thm:bilp} and \ref{thm:ilp}} \label{app:bilp}

We only prove Theorem \ref{thm:bilp}, as the proof of Theorem \ref{thm:ilp} is essentially the same but less involved, as the optimization is only done in one dimension.

\thmbilp*

\begin{proof}
Let $\by \in \BB^N$ satisfy the conditions of \eqref{eq:bilp}, and let $\bx \in \BB^B$ be the attack given by $x_v = y_v$ for $v \in B$. We claim that $y_v \leq \struc(\bx,v)$ for all $v \in N$, and we prove this by induction; clearly it is true for BASs. Suppose the claim is true for $v_1,\ldots,v_n$, and consider $v = \AND(v_1,\ldots,v_n)$. Then \eqref{eq:bilp} amounts to $y_v \leq \min\{y_{v_i} \mid i \leq n\}$. By the induction hypothesis we then have
\begin{equation}
y_v \leq \min\{y_{v_i} \mid i \leq n\} \leq \min\{\struc(\bx,v_i) \mid i \leq n\} = \struc(\bx,v).
\end{equation}
Similarly, if $v = \OR(v_1,\ldots,v_n)$, we get
\begin{align*}
y_v &\leq \min\left\{1,\sum_{i\leq n} y_{v_i}\right\}\\
&\leq \min\left\{1,\sum_{i \leq n} \struc(\bx,v_i)\right\}\\
&= \begin{cases}
0, & \textrm{ if $\struc(\bx,v_i)=0$ for all $i$},\\
1, & \textrm{ otherwise}
\end{cases}\\
&= \struc(\bx,v).
\end{align*}
This proves the claim.

We now continue with the proof of the theorem. Let $\mathcal{F}$ be the set of $\by$ satisfying the conditions of \eqref{eq:bilp}, and write for $\by \in \mathcal{F}$:
\begin{align*}
f(\by) = \sum_{v \in B} \cc(v)y_v, \quad \quad g(\by) = -\sum_{v \in N} \dd(v)y_v.
\end{align*}
Our aim is to prove the following equality:
\begin{equation} \label{eq:bilppf}
\left\{\vvec{c\\-d} \middle| \vvec{c\\d} \in \PF\right\} = \umin \left\{\vvec{f(\by)\\g(\by)} \middle| \by \in \mathcal{F}\right\},
\end{equation}
where the $\umin$ on the RHS is taken in the poset $(\mathbb{R}^2,\leq)$. We first prove ``$\supseteq$''. Let $\by \in \mathcal{F}$ be such that $\vvec{f(\by)\\g(\by)}$ is minimal. Let $\bx \in \FA$ be such that $x_v = y_v$ for all $v \in B$, and let $\by' \in \BB^N$ be given by $y'_v = \struc(\bx,v)$ for all $v$. A straightforward induction proof shows that $\by' \in \mathcal{F}$, and by the claim we have $y_v \leq y'_v$ for all $v \in N$, with equality when $v \in B$. It follows that $f(\by) = f(\by')$ and $g(\by) \geq g(\by')$. Since $\vvec{f(\by)\\g(\by)}$ is minimal, this must be an equality, and we get
 \begin{equation*}
 \vvec{f(\by)\\g(\by)} = \vvec{f(\by')\\g(\by')} = \vvec{\hc(\bx)\\-\hd(\bx)}.
 \end{equation*}
 It remains to be shown that $\vvec{\hc(\bx)\\\hd(\bx)} \in \PF$. Let $\bx'' \in \FA$ be such that $\hc(\bx') \leq \hc(\bx)$ and $\hd(\bx'') \geq \hd(\bx)$. Let $\by'' \in \BB^N$ be such that $y''_v = \struc(\bx'',v)$ for all $v$; then again $\by'' \in \mathcal{F}$, and
 \begin{equation*}
 \vvec{f(\by'')\\g(\by'')} = \vvec{\hc(\bx'')\\-\hd(\bx'')} \leq \vvec{\hc(\bx)\\-\hd(\bx)} = \vvec{f(\by)\\g(\by)}.
 \end{equation*}
 Since $\vvec{f(\by)\\g(\by)}$ is minimal, this means that equality must hold here; this shows that $\bx$ is Pareto optimal, and so we have shown ``$\supseteq$'' in \eqref{eq:bilppf}.

The argument to prove ``$\subseteq$'' is very similar. Let $\bx \in \FA$ be such that $\vvec{\hc(\bx)\\ \hd(\bx)} \in \PF$, and let $\by \in \BB^N$ be given by $y_v = \struc(\bx,v)$. Then $\by \in \mathcal{F}$ and $\vvec{f(\by)\\g(\by)} = \vvec{\hc(\bx)\\-\hd(\bx)}$; we need to show that this vector is minimal. Let $\by' \in \mathcal{F}$ be such that $\vvec{f(\by')\\g(\by')} \leq \vvec{f(\by)\\g(\by)}$. Define $\bx'' \in \FA$ by $x''_v = y'_v$ for all $v \in B$, and define $\by'' \in \BB^N$ by $\by''_v = \struc(\bx'',v)$ for all $v \in N$. Similar to the above we have $\by'' \in \mathcal{F}$ and $\vvec{f(\by'')\\g(\by'')} \leq \vvec{f(\by')\\g(\by')}$. It follows that we have
 \begin{equation*}
 \vvec{\hc(\bx'')\\-\hd(\bx'')} = \vvec{f(\by'')\\g(\by'')} \leq \vvec{f(\by')\\g(\by')} \leq \vvec{f(\by)\\g(\by)} = \vvec{\hc(\bx)\\-\hd(\bx)}.
 \end{equation*}
 Since $\bx$ is Pareto optimal by assumption, the above must have equalities throughout. In particular $\vvec{f(\by')\\g(\by')} = \vvec{f(\by)\\g(\by)}$, which proves that $\by$ is minimal. This shows ``$\subseteq$'' in \eqref{eq:bilppf}, completing the proof.
 \end{proof}

\subsection{Proofs of Theorems \ref{thm:DgC-tree}, \ref{thm:CDPF-tree}, \ref{thm:EDgC-tree} and \ref{thm:CEDPF-tree}} \label{app:bu}

We will show that all these theorems follow from a shared main result, namely Theorem \ref{thm:aux} below. In order to formulate it, we first need a little more notation. For a node $v$, we let $T_v = (N_v,E_v)$ be the sub-DAG of $T$ consisting of $v$ and all its descendants, together with its set of BASs $B_v$, set of attacks $\mathcal{A}_v$, cost, damage, and expected damage functions $\hc_v$, $\hd_v$ and $\hd_{\textrm{E},v}$:
 \begin{align*}
 N_v &= \{w \in N \mid \exists \textrm{ path } v \rightarrow w\},\\
 E_v &= E \cap (N_v \times N_v),\\
 T_v &= (N_v,E_v),\\
 B_v &= B \cap N_v,\\
 \mathcal{A}_v &= \mathbb{B}^{B_v},\\
 \hc_v(\mathbf{x}) &= \sum_{w \in B_v} \cc(w)x_w, & \textrm{ for $\mathbf{x} \in \mathcal{A}_v$},\\
 \hd_v(\mathbf{x}) &= \sum_{w \in N_v} \dd(w)\struc(w),& \textrm{ for $\mathbf{x} \in \mathcal{A}_v$},\\
 \hd_{\textrm{E},v}(\mathbf{x}) &= \mathbb{E}\left[\hd_v(Y_{\mathbf{x}})\right], & \textrm{ for $\mathbf{x} \in \mathcal{A}_v$}.
 \end{align*}
 Furthermore, we let $\kk_v\colon \mathcal{A}_v \times N_v \rightarrow \BB$ be the probabilistic structure function of $T_v$, analogous to Definition \ref{def:sf}. Based on the equations above we define, for a node $v$, the \emph{extended expected attribute map} $\recht{EA}_v\colon \mathcal{A}_v \rightarrow \DP$ by
 \begin{equation*}
 \recht{EA}(\bx) = \vvec{\hc_v(\bx)\\ \hd_{\recht{E},v}(\bx) \\ \kk(\bx,v)}.
 \end{equation*}

Then as we will show below, Theorems \ref{thm:DgC-tree}, \ref{thm:CDPF-tree}, \ref{thm:EDgC-tree} and \ref{thm:CEDPF-tree} all follow from the following result:

\begin{theorem} \label{thm:aux}
 For every $v \in N$ one has
 \begin{equation*}
 \CpU(v) = \underline{\min} \left\{\recht{EA}_v(\mathbf{x}) \in \DP \middle| \mathbf{x} \in \mathcal{A}_v, \hc_v(\mathbf{x}) \leq U\right\}.    
 \end{equation*}
 \end{theorem}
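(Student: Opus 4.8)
The plan is to prove Theorem \ref{thm:aux} by structural induction on the subtree $T_v$, exploiting that $T$ is treelike and binary. All four of Theorems \ref{thm:DgC-tree}, \ref{thm:CDPF-tree}, \ref{thm:EDgC-tree} and \ref{thm:CEDPF-tree} then follow by specializing the statement to $v = \R{T}$: the EDgC/DgC variants by reading off the maximal damage coordinate of $\CpU(\R{T})$, and the Pareto-front variants by taking $U = \infty$ and applying the projection $\pi$. The deterministic Theorems \ref{thm:DgC-tree} and \ref{thm:CDPF-tree} reduce further to the probabilistic ones by setting $\pp \equiv 1$, so that $\kk = \struc$, $\hd_{\recht{E},v} = \hd_v$, $\CpU = \CU$, and $\star$ and multiplication become $\vee$ and $\wedge$ on $\BB$. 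For the base case $v \in B$, the set $\mathcal{A}_v$ has two elements, and a direct computation gives $\recht{EA}_v = (0,0,0)^{\intercal}$ for the empty attack and $(\cc(v),\pp(v)\dd(v),\pp(v))^{\intercal}$ for the full one; restricting to $\{c \leq U\}$ and taking $\umin$ reproduces the two cases of \eqref{eq:probtreebas}.

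For the inductive step, write $v = \OR(v_1,v_2)$ or $v = \AND(v_1,v_2)$. Since $T$ is treelike, $B_{v_1} \cap B_{v_2} = \varnothing$, so every $\bx \in \mathcal{A}_v$ decomposes uniquely as $(\bx_1,\bx_2)$. The first observation I would establish is that $\recht{EA}_v$ is \emph{compositional}: combining cost additivity, the damage recursion \eqref{eq:probtreedamage}, and \eqref{eq:kor}/\eqref{eq:kand} for $\kk$, one gets $\recht{EA}_v(\bx) = \Phi(\recht{EA}_{v_1}(\bx_1),\recht{EA}_{v_2}(\bx_2))$, where $\Phi\colon \DP \times \DP \to \DP$ is exactly the combining map appearing inside $\mU$ in \eqref{eq:probtreeor}/\eqref{eq:probtreeand}. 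Because costs are nonnegative, $\hc_v(\bx) \leq U$ forces $\hc_{v_i}(\bx_i) \leq U$; writing $R_i = \{\recht{EA}_{v_i}(\bx_i) : \hc_{v_i}(\bx_i) \leq U\}$ and letting $c(\cdot)$ denote the cost coordinate, the cost-feasible reachable set at $v$ equals $R := \{\Phi(t_1,t_2) : t_i \in R_i,\ c(t_1)+c(t_2) \leq U\}$, so the goal becomes $\CpU(v) = \umin R$.

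The two technical ingredients I would then isolate are: (i) \emph{$\Phi$ is monotone in each argument w.r.t.\ $\poD$} — the cost coordinate is additive, the probability coordinate uses that both $p \star p' = 1-(1-p)(1-p')$ and $pp'$ are nondecreasing in $p,p'$ on $[0,1]$, and the damage coordinate is a sum of nondecreasing terms since $\dd(v) \geq 0$; and (ii) a \emph{cofinality lemma}: if $A' \subseteq A$ are finite and every $a \in A$ satisfies $a' \poD a$ for some $a' \in A'$, then $\umin A = \umin A'$ (the antisymmetry of $\poD$ is what upgrades domination to equality at minimal elements). By the induction hypothesis $\CpU(v_i) = \umin R_i$, and by definition $\CpU(v) = \umin A'$ with $A' = \{\Phi(t_1,t_2) : t_i \in \CpU(v_i),\ c(t_1)+c(t_2)\leq U\}$, the cost restriction being the effect of $\mU$. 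Given any $\Phi(t_1,t_2) \in R$, the cofinality of $\umin R_i$ in $R_i$ supplies $s_i \in \CpU(v_i)$ with $s_i \poD t_i$; then $c(s_1)+c(s_2) \leq c(t_1)+c(t_2) \leq U$ so $\Phi(s_1,s_2) \in A'$, and monotonicity yields $\Phi(s_1,s_2) \poD \Phi(t_1,t_2)$. This verifies the hypothesis of the cofinality lemma for $A' \subseteq A = R$, whence $\CpU(v) = \umin A' = \umin R$, closing the induction.

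The main obstacle I anticipate is the cost-constraint bookkeeping together with the cofinality lemma: one must check that restricting each constituent front to its own budget loses nothing (which is where nonnegativity of costs is essential), and that monotonicity of $\Phi$ genuinely transports a domination between constituents into a domination between combined triples. The probability coordinate is the single place where a short computation is unavoidable, since the nonlinearity of $\star$ means its monotonicity is not purely formal. Finally, antisymmetry of $\poD$ is what licenses the conclusion that a minimal combined triple is actually \emph{attained} from minimal constituents, rather than merely dominated by one — this is the step that makes the bottom-up discarding sound.
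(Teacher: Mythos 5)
Your proposal is correct, and its skeleton coincides with the paper's: structural induction over the (binary, treelike) AT, a base case matching \eqref{eq:probtreebas}, and a compositionality statement $\recht{EA}_v(\bx) = \Phi(\recht{EA}_{v_1}(\bx_1),\recht{EA}_{v_2}(\bx_2))$, which is precisely the paper's Lemma \ref{lem:aux1}. Where you diverge is in how the inductive step is discharged. The paper factors it into five equational identities (its Lemma \ref{lem:aux2}): the cost filter $H_U$ commutes with $\umin$, $H_U$ can be pushed through $\triangle_d$ and $\triangledown_d$ (this is where nonnegativity of costs enters), and $\umin$ can be pushed through $\triangle_d$ and $\triangledown_d$; it then closes the induction by a purely mechanical chain of rewrites. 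You instead prove a single order-theoretic cofinality lemma (a dominating subset of a finite poset has the same set of minima, with antisymmetry upgrading domination to equality at minimal elements) together with monotonicity of $\Phi$ with respect to $\poD$, fold the cost constraint directly into the reachable sets $R_i$, $R$, $A'$ using nonnegativity of costs, and apply the lemma once. The two arguments have the same mathematical content — the paper's proofs of \eqref{eq:auxeq4} and \eqref{eq:auxeq5} are exactly your monotonicity-plus-domination element chase, and your cost bookkeeping is the content of \eqref{eq:auxeq2}/\eqref{eq:auxeq3} — but the packaging trades off differently: the paper's identities are modular and reusable (and make the final derivation a transparent calculation), while your version is shorter, applies the key principle only once, and makes explicit the two facts doing the real work (monotonicity of the combining map and the antisymmetry argument). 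One point to keep honest in a full write-up: your cofinality lemma requires that every element of $R_i$ be dominated by some minimal element, which needs finiteness (or the descending chain condition) of the reachable sets; this holds here since $|\mathcal{A}_v| = 2^{|B_v|}$ is finite, and the paper's Lemma \ref{lem:aux2} implicitly relies on the same fact when it picks $y'' \in \umin(Y)$ with $y'' \poD y'$.
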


To prove this theorem we first need a few auxiliary lemmas, as well as some definitions. For $\vvec{c_1\\d_1\\p_1},\vvec{c_i\\d_i\\p_i} \in \DP$ and $d \in \Rnn$, we define

\begin{align*}
 \vvec{c_1\\d_1\\p_1} \triangle_d \vvec{c_2\\d_2\\p_2} &= \vvec{c_1+c_2\\d_1+d_2+p_1p_2d\\p_1p_2},\\
 \vvec{c_1\\d_1\\p_1} \triangledown_d \vvec{c_2\\d_2\\p_2} &= \vvec{c_1+c_2\\d_1+d_2+(p_1 \star p_2)d \\ p_1 \star p_2}.
 \end{align*}

Slightly abusing notation, we write $X \triangle_d Y = \left\{\vvec{c_1\\d_1\\p_1} \triangle_d \vvec{c_2\\d_2\\p_2} \middle| \vvec{c_1\\d_1\\p_1} \in X, \vvec{c_2\\d_2\\p_v} \in Y\right\}$ and likewise for $\triangledown_d$. We then have the following result:

\begin{lemma} \label{lem:aux1}
 For an internal node $v \in N\setminus B$ one has
 \begin{align}
 &\recht{EA}_v(\mathcal{A}_v)\nonumber \\
 &= \begin{cases}
 \recht{EA}_{v_1}(\mathcal{A}_{v_1}) \triangle_{\dd(v)} \recht{EA}_{v_2}(\mathcal{A}_{v_2}), & \textrm{ if $\gamma(v) = \AND$},\\
 \recht{EA}_{v_1}(\mathcal{A}_{v_1}) \triangledown_{\dd(v)} \recht{EA}_{v_2}(\mathcal{A}_{v_2}), & \textrm{ if $\gamma(v) = \OR$}.
 \end{cases} \label{eq:aux1}
 \end{align}
 \end{lemma}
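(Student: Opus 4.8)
The plan is to prove the two set identities in \eqref{eq:aux1} by first establishing the corresponding \emph{pointwise} identity for each attack, and then taking images. Since $T$ is treelike, the sub-ATs $T_{v_1}$ and $T_{v_2}$ share no BASs, so $B_v = B_{v_1} \sqcup B_{v_2}$ and hence $\mathcal{A}_v = \mathcal{A}_{v_1} \times \mathcal{A}_{v_2}$; thus every $\bx \in \mathcal{A}_v$ decomposes uniquely as $\bx = (\bx_1,\bx_2)$ with $\bx_i \in \mathcal{A}_{v_i}$. Under the abuse of notation defining $X \triangle_d Y$ and $X \triangledown_d Y$, the claim \eqref{eq:aux1} is then exactly the assertion that $\recht{EA}_v(\bx_1,\bx_2) = \recht{EA}_{v_1}(\bx_1) \triangle_{\dd(v)} \recht{EA}_{v_2}(\bx_2)$ when $\type(v)=\AND$ (and with $\triangledown_{\dd(v)}$ when $\type(v)=\OR$) for all such pairs, since ranging over $(\bx_1,\bx_2)$ recovers both sides. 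So it suffices to verify this pointwise identity.

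First I would write $\recht{EA}_{v_i}(\bx_i) = \vvec{c_i \\ d_i \\ p_i}$ with $c_i = \hc_{v_i}(\bx_i)$, $d_i = \hd_{\recht{E},v_i}(\bx_i)$, $p_i = \kk_{v_i}(\bx_i,v_i)$, and then check the three coordinates of $\recht{EA}_v(\bx)$ one at a time against the definitions of $\triangle_{\dd(v)}$ and $\triangledown_{\dd(v)}$. The cost coordinate is immediate: since $B_{v_1}$ and $B_{v_2}$ are disjoint, $\hc_v(\bx) = \hc_{v_1}(\bx_1)+\hc_{v_2}(\bx_2) = c_1+c_2$. For the activation coordinate I would use that, by treelikeness, the activation probability of a child depends only on the BASs of its own sub-AT, so $\kk_v(\bx,v_i) = \kk_{v_i}(\bx_i,v_i) = p_i$; applying \eqref{eq:kor} (resp. \eqref{eq:kand}) within $T_v$ then gives $\kk_v(\bx,v) = p_1 \star p_2$ (resp. $p_1 p_2$), matching the third coordinate of $\triangledown_{\dd(v)}$ (resp. $\triangle_{\dd(v)}$). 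Finally, the expected-damage coordinate follows from \eqref{eq:probtreedamage} applied to $T_v$: $\hd_{\recht{E},v}(\bx) = \hd_{\recht{E},v_1}(\bx_1) + \hd_{\recht{E},v_2}(\bx_2) + \kk_v(\bx,v)\dd(v) = d_1+d_2+(p_1\star p_2)\dd(v)$ in the OR case, and with $p_1 p_2$ replacing $p_1 \star p_2$ in the AND case. Substituting the computed activation value closes the pointwise identity in both cases, and taking the union over all $\bx_1 \in \mathcal{A}_{v_1}$, $\bx_2 \in \mathcal{A}_{v_2}$ yields \eqref{eq:aux1}.

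The only delicate point---and the part I would spell out most carefully---is the justification that the two sub-AT contributions genuinely decouple, i.e. that \eqref{eq:kor}, \eqref{eq:kand} and \eqref{eq:probtreedamage} may be invoked for $T_v$. This rests entirely on treelikeness: because $B_{v_1}\cap B_{v_2} = \varnothing$ and the BAS successes are independent (Section \ref{sec:prob}), the random variables $\struc(Y_{\bx},v_1)$ and $\struc(Y_{\bx},v_2)$ are independent, and the restriction of $Y_{\bx}$ to $B_{v_i}$ is distributed as $Y_{\bx_i}$ by the product form in Definition \ref{def:de}. This is precisely what licenses both the inclusion-exclusion/product formula for $\kk_v(\bx,v)$ and the additive splitting $\hd_{\recht{E},v_i}(\bx_i) = \Ex[\hd_{v_i}(Y_{\bx}|_{B_{v_i}})]$ of the expected damage. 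I would note explicitly that $T_v$ is itself a treelike cdp-AT with root $v$ and children $v_1,v_2$, so the identities derived in Section \ref{sec:treeprob} for the whole tree apply to it verbatim; once this is observed, the remainder of the argument is the bookkeeping sketched above.
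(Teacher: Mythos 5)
Your proof is correct and follows essentially the same route as the paper's: decompose $\mathcal{A}_v = \mathcal{A}_{v_1} \times \mathcal{A}_{v_2}$ via treelikeness, verify the identity coordinate-wise (cost by disjointness of BAS sets, activation probability by independence, expected damage via the recursive damage formula), and range over all pairs $(\bx_1,\bx_2)$ to obtain the set identity. Your explicit justification that the restriction of $Y_{\bx}$ to $B_{v_i}$ is distributed as $Y_{\bx_i}$, and that $T_v$ is itself a treelike cdp-AT so the Section~\ref{sec:treeprob} identities apply to it, is a point the paper's proof uses only implicitly, so spelling it out is a welcome addition rather than a deviation.
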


\begin{proof}
 Suppose $\gamma(v) = \AND$. Since $T$ is treelike, one has $B_v = B_{v_1} \cup B_{v_2}$ and $B_{v_1} \cap B_{v_2} = \varnothing$. It follows that we can identify $\FA_v = \BB^{B_v} = \BB^{B_{v_1}} \times \BB^{B_{v_2}} = \FA_{v_1} \times \FA_{v_2}$. Let $\bx_1 \in \FA_{v_1}$ and $\bx_2 \in \FA_{v_2}$, and let $\bx = (\bx_1,\bx_2)$ be the corresponding element of $\FA$, i.e., $x_w = x_{i,w}$ when $w \in B_{v_i}$ for $i=1,2$. Then the attributes of $\bx$ are given by
 \begin{align*}
 \hc_{v}(\bx) &= \sum_{w \in B_v} x_w\cc(w)\\
 &= \sum_{w \in B_{v_1}} x_{1,w}\cc(w) + \sum_{w \in B_{v_2}} x_{2,w}\cc(w)\\
 &= \hc_{v_1}(\bx_1) + \hc_{v_2}(\bx_2),\\
 \kk_v(\bx,v) &= \kk_v(\bx,v_1)\kk_v(\bx,v_2)\\
 &= \kk_{v_1}(\bx_1,v_1)\kk_{v_2}(\bx_2,v_2),\\
 \hd_v(\bx) &= \sum_{w \in B_v} \kk_v(\bx,w)\dd(w)\\
 &= \sum_{w \in B_{v_1}} \kk_{v_1}(\bx_1,w)\dd(w)\\
 & \quad \quad +\sum_{w \in B_{v_1}} \kk_{v_2}(\bx_2,w)\dd(w)+\kk_v(\bx,v)\dd(v)\\
 &= \hd_{v_1}(\bx_1)+\hd_{v_2}(\bx_2)+\kk_{v_1}(\bx_1,v_1)\kk_{v_2}(\bx_2,v_2)\dd(v).
 \end{align*}
 We can write this more succinctly as
 \begin{equation*}
 \recht{EA}_v(\bx) = \recht{EA}_{v_1}(\bx_1) \triangle_{\dd(v)} \recht{EA}_{v_2}(\bx_2).
 \end{equation*}
 Ranging over all $\bx_1$ and $\bx_2$ (and consequently over all $\bx$) now proves the lemma. The case that $\gamma(v) = \OR$ is completely analogous.
 \end{proof}

 Furthermore, for $X \subseteq \DP$, we define 
 \begin{equation*}
 H_U(X) := \left\{\vvec{c\\d\\p} \in X \middle| c \leq U\right\}.
 \end{equation*}
 This function, along with the known function $\umin$, satisfies the following properties:

\begin{lemma} \label{lem:aux2}
 For $X,Y \subseteq \DP$ and $d \in \Rnn$ the following hold:
 \begin{align}
 H_U(\umin(X)) &= \umin(H_U(X)), \label{eq:auxeq1}\\
 H_U(X \triangle_d H_U(Y)) &= H_U(X \triangle_d Y), \label{eq:auxeq2}\\
 H_U(X \triangledown_d H_U(Y)) &= H_U(X \triangledown_d Y), \label{eq:auxeq3}\\
 \umin(X \triangle_d \umin(Y)) &= \umin(X \triangle_d Y), \label{eq:auxeq4}\\
 \umin(X \triangledown_d \umin(Y)) &= \umin(X \triangledown_d Y). \label{eq:auxeq5}
 \end{align}
 \end{lemma}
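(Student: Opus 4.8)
The plan is to split the five identities into two groups: the three cutoff identities \eqref{eq:auxeq1}–\eqref{eq:auxeq3}, which are elementary order-theoretic bookkeeping, and the two $\umin$ identities \eqref{eq:auxeq4}–\eqref{eq:auxeq5}, which rest on a single structural fact. That fact is that both combination operations are \emph{monotone} in each argument with respect to $\poD$: if $z \poD z'$, then $x \triangle_d z \poD x \triangle_d z'$ and $x \triangledown_d z \poD x \triangledown_d z'$, and symmetrically in the other slot. I would verify this coordinatewise. The cost coordinate $c_x + c_z$ is nondecreasing in $c_z$; the probability coordinate $p_x p_z$ (resp. $p_x \star p_z = 1-(1-p_x)(1-p_z)$) is nondecreasing in $p_z$, the latter because lowering $1-p_z$ lowers the product; and the damage coordinate $d_z + p_x p_z\,\dd(v)$ (resp. $d_z + (p_x \star p_z)\dd(v)$) increases when $d_z$ and $p_z$ increase, since $\dd(v)\ge 0$. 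I expect this damage coordinate, which couples the damage and probability components, to be the one requiring the most care.

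For \eqref{eq:auxeq1} the key observation is that $\poD$ compares first coordinates by $\le$, so the predicate $c \le U$ is preserved downward: if $w \spoD z$ then $c_w \le c_z$. Consequently an element that survives $H_U$ is minimal in $X$ if and only if it is minimal in $H_U(X)$ (nothing below it in $X$ can violate the cutoff), which yields both inclusions at once. For \eqref{eq:auxeq2} and \eqref{eq:auxeq3} the inclusion ``$\subseteq$'' is immediate from $H_U(Y)\subseteq Y$ and monotonicity of the operations on sets. For ``$\supseteq$'' I would note that any $z = x \triangle_d y$ (resp. $x \triangledown_d y$) with cost $c_x + c_y \le U$ already forces $c_y \le U$, because all costs lie in $\Rnn$; hence $y \in H_U(Y)$ and $z$ is in fact produced from the cut set.

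For \eqref{eq:auxeq4}–\eqref{eq:auxeq5} I would factor out the following general lemma: if $A \subseteq B$ in a poset and every $b \in B$ dominates some $a \in A$ (that is, $a \poD b$), then $\umin(A) = \umin(B)$. I would apply it with $A = X \triangle_d \umin(Y)$ and $B = X \triangle_d Y$. The inclusion $A \subseteq B$ holds since $\umin(Y)\subseteq Y$, and the domination hypothesis is exactly where monotonicity enters: given $x \triangle_d y \in B$, choose $y' \in \umin(Y)$ with $y' \poD y$ — such a $y'$ exists because the sets in question are finite, hence have no infinite descending chains — and then $x \triangle_d y' \poD x \triangle_d y$, with $x \triangle_d y' \in A$. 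The identity \eqref{eq:auxeq5} is proved identically with $\triangledown_d$ in place of $\triangle_d$.

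It remains to prove the general lemma, which I would do by the standard argument. If $b \in \umin(B)$, the element $a \in A$ it dominates satisfies $a \poD b$ with $a \in B$, so minimality forces $a = b$, whence $b \in A$ and $b \in \umin(A)$. Conversely, if $a \in \umin(A)$ and some $b \in B$ had $b \spoD a$, then the element $a' \in A$ with $a' \poD b$ would satisfy $a' \spoD a$, contradicting minimality of $a$ in $A$; hence $a \in \umin(B)$. The only subtlety, which I would flag explicitly, is the use of finiteness to guarantee that every element of $Y$ lies above a minimal element — a hypothesis that is harmless here since all the $\CpU(v)$ and images $\recht{EA}_v(\mathcal{A}_v)$ are finite.
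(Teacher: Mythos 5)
Your proof is correct. For \eqref{eq:auxeq1}--\eqref{eq:auxeq3} it is essentially the paper's own argument: the same observation that $\poD$ compares cost coordinates by $\leq$, so the cutoff $c \leq U$ is inherited downward along the order, for \eqref{eq:auxeq1}; and the same nonnegativity argument ($c_1 + c_2 \leq U$ with $c_1, c_2 \geq 0$ forces $c_2 \leq U$) for \eqref{eq:auxeq2}--\eqref{eq:auxeq3}. For \eqref{eq:auxeq4}--\eqref{eq:auxeq5} the ingredients are also identical --- monotonicity of $\triangle_d$ and $\triangledown_d$ in each slot, plus the existence of a minimal element of $Y$ below any given element --- but you organize them differently: where the paper chases elements through both inclusions of $\umin(X \triangle_d \umin(Y)) = \umin(X \triangle_d Y)$ directly, you factor out a reusable poset lemma (if $A \subseteq B$ and every $b \in B$ lies above some $a \in A$, then $\umin(A) = \umin(B)$) and reduce both identities to it in one stroke. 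Besides unifying \eqref{eq:auxeq4} and \eqref{eq:auxeq5}, this abstraction exposes a hypothesis the paper uses silently: the step ``let $y'' \in \umin(Y)$ with $y'' \sqsubseteq y'$'' presupposes that every element of $Y$ dominates a minimal one, which requires finiteness of $Y$ (or at least a descending chain condition). Your explicit flag of this is not pedantry: as literally stated for arbitrary $X, Y \subseteq \DP$, the lemma is false. For instance, take $X = \left\{\vvec{0\\0\\0}\right\}$ and $Y = \left\{\vvec{0\\0\\1-1/n} \;\middle|\; n \geq 1\right\}$; then $\umin(Y) = \varnothing$, so the left-hand side of \eqref{eq:auxeq4} is $\varnothing$, while $X \triangle_d Y = \left\{\vvec{0\\0\\0}\right\}$ makes the right-hand side nonempty. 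Since every set occurring in the proof of Theorem \ref{thm:aux} is finite (they are images of finite attack sets), the paper's downstream results are unaffected, but your version of the lemma, with the finiteness hypothesis made explicit, is the correct statement.
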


\begin{proof}
 We tackle these equations one by one, starting with \eqref{eq:auxeq1}. Let $x = \vvec{c\\d\\p} \in H_U(\umin(X))$; then $x \in \umin(X)$. Furthermore, $c \leq U$, and so $x \in H_U(X)$. Suppose that $x \notin \umin(H_U(X))$; then there exists an $x' \in H_U(X)$ such that $x' \sqsubset x$. But this contradicts the fact that $x \in \umin(X)$, and such $x'$ cannot exist; this proves $H_U(\umin(X)) \subseteq \umin(H_U(X))$.

Now let $x = \vvec{c\\d\\p} \in \umin(H_U(X))$. Suppose $x \notin \umin X$; then there exists an $x' = \vvec{c'\\d'\\p'}$ such that $x' \sqsubset x$. By definition of $\sqsubset$ this means that $c' \leq c$, and so $x' \in H_U(X)$; but then this contradicts the fact that $x \in \umin(H_U(X))$. We can conclude that $x \in \umin(X)$. Since $x \in H_U(X)$, we also know that $c \leq U$, and so $x \in H_U(\umin(X))$. This proves $H_U(\umin(X)) \supseteq \umin(H_U(X))$.

Next, we prove \eqref{eq:auxeq2} (equation \eqref{eq:auxeq3} is proven analogously to this and is therefore skipped). Since $H_U(Y) \subseteq Y$ it is clear that $H_U(X \triangle_d H_U(Y)) \subseteq H_U(X \triangle_d Y)$; we now prove the opposite direction. Let $x = \vvec{c_1\\d_1\\p_1} \in X$, $y = \vvec{c_2\\d_2\\p_2} \in Y$ be such that $x \triangle_d y \in H_U(X \triangle_d Y)$. Then $c_1+c_2 \leq U$, and since $c_1,c_2 \in \Rnn$ this implies that $c_2 \in U$. Hence $y \in H_U(Y)$, and this proves $H_U(X \triangle_d H_U(Y)) \supseteq H_U(X \triangle_d Y)$.

Finally, we consider \eqref{eq:auxeq4} and \eqref{eq:auxeq5}; since they can be proven completely analogous we only consider the former. First, we note that $\triangle_d$ preserves $\sqsubseteq$ in the following sense: if $y \sqsubseteq y'$, then $x \triangle_d y \sqsubseteq x \triangle_d y'$ for all $x,y,y' \in \DP$. Now let $x \in X$ and $y \in \umin(Y)$ be such that $x \triangle_d y \in \umin(X \triangle_d \umin(Y))$. Suppose that $x \triangle_d y \notin \min(X \triangle_d Y)$; then there exist $x' \in X$, $y' \in Y$ such that $x' \triangle_d y' \sqsubset x \triangle_d y$. Let $y'' \in \umin(Y)$ be such that $y'' \sqsubseteq y'$; then
 \begin{equation*}
 X \triangle_d \umin(Y) \ni x' \triangle_d y'' \sqsubseteq x' \triangle_d y' \sqsubset x \triangle_d y,
 \end{equation*}
 which contradicts the fact that $x \triangle_d y \in \umin(X \triangle_d \umin(Y))$. Hence $x \triangle_d y \in \umin(X \triangle_d Y)$, and this proves $\umin(X \triangle_d \umin(Y)) \subseteq \umin(X \triangle_d Y)$.

Now let $x \in X$ and $y \in Y$ such that $x \triangle_d y \in \min(X \triangle_d Y)$. Let $y' \in \umin(Y)$ such that $y' \sqsubseteq y$. Then $x \triangle_d y' \sqsubseteq x \triangle_d y$. Since the latter is assumed to be minimal in $X \triangle_d Y$, it follows that $x \triangle_d y' = x \triangle_d y$. In particular, $x \triangle_d y \in X \triangle_d \umin(Y)$. Since $x \triangle_d y$ is minimal in $X \triangle_d Y$, it is certainly minimal in the smaller set $X \triangle_d \umin(Y)$. This proves $\umin(X \triangle_d \umin(Y)) \supseteq \umin(X \triangle_d Y)$.
 \end{proof}

\begin{proof}[Proof of Theorem \ref{thm:aux}]
 We prove this via induction on $v$. From \eqref{eq:probtreebas} it is clear that it holds for BASs. Now suppose $v = \AND(v_1,v_2)$, and that the statement holds for $v_1$ and $v_2$. We can then write \eqref{eq:probtreeand} as
 \begin{equation*}
 \CpU(v) = \umin(H_U(\CpU(v_1) \triangle_d \CpU(v_2)))
 \end{equation*}
 and what we need to prove as
 \begin{equation*}
 \CpU(v) \stackrel{?}{=} \umin(H_U(\recht{EA}_v(\mathcal{A}_v))).
 \end{equation*}
 Using the induction hypothesis and Lemmas \ref{lem:aux1} and \ref{lem:aux2}, we find
 \begin{align*}
 &\CpU(v)\\
 &= \umin H_U[\CpU(v_2) \triangle_d \CpU(v_2)]\\
 &\stackrel{\text{IH}}{=} \umin H_U\left[\umin H_U(\recht{EA}_{v_1}(\mathcal{A}_{v_1})) \triangle_d \umin H_U(\recht{EA}_{v_2}(\mathcal{A}_{v_2}))\right] \\
 &\stackrel{\eqref{eq:auxeq1}}{=}  H_U \umin\left[\umin H_U(\recht{EA}_{v_1}(\mathcal{A}_{v_1})) \triangle_d \umin H_U(\recht{EA}_{v_2}(\mathcal{A}_{v_2}))\right] \\
 &\stackrel{\eqref{eq:auxeq4}}{=} H_U \umin\left[H_U(\recht{EA}_{v_1}(\mathcal{A}_{v_1})) \triangle_d H_U(\recht{EA}_{v_2}(\mathcal{A}_{v_2}))\right] \\
 &\stackrel{\eqref{eq:auxeq1}}{=} \umin H_U\left[H_U(\recht{EA}_{v_1}(\mathcal{A}_{v_1})) \triangle_d H_U(\recht{EA}_{v_2}(\mathcal{A}_{v_2}))\right] \\
 &\stackrel{\eqref{eq:auxeq2}}{=} \umin H_U\left[\recht{EA}_{v_1}(\mathcal{A}_{v_1}) \triangle_d \recht{EA}_{v_2}(\mathcal{A}_{v_2})\right] \\
 &\stackrel{\eqref{eq:aux1}}{=} \umin H_U\left[\recht{EA}_v(\mathcal{A}_v)\right],
 \end{align*}
 which is what needed to be shown. The case that $v = \OR(v_1,v_2)$ is completely analogous.
 \end{proof}
 We are now in a position to prove Theorems \ref{thm:DgC-tree}, \ref{thm:CDPF-tree}, \ref{thm:EDgC-tree} and \ref{thm:CEDPF-tree}. Note that the deterministic scenario can be reduced to the probabilistic scenario, by taking $\pp(v) = 1$ for all $v \in B$; this ensures that $\hdE(\bx) = \hd(\bx)$ for all $\bx$. It also causes the definition of $\CU(v)$ to coincide with that of $\CpU(v)$ for all $v$. Therefore it suffices to prove \ref{thm:EDgC-tree} and \ref{thm:CEDPF-tree}.

\thmEDgCtree*

\begin{proof}
 Let $d_{\recht{E,opt}}$ be the solution to EDgC, i.e., there exists an $\bx_0 = \mathcal{A}$ such that $c_0:= \hc(\bx) \leq U$ and $d_{\recht{E,opt}} = \hdE(\bx)$, and $d_{\recht{E,opt}}$ is maximal under this constraint. Let $x_0 = \recht{EA}(\bx)$; then certainly $x_0 \in H_U(\recht{EA}(\mathcal{A}))$. Let $x' = \vvec{c'\\d'\\p'} \in \umin H_U(\recht{EA}(\mathcal{A}))$ with $x' \sqsubset x$. Then $c' \leq c_0 \leq U$ and $d' \geq d_{\recht{E,opt}}$; since $d_{\recht{E,opt}}$ was assumed to be optimal given $c_0 \leq U$, we conclude that $d = d'$. It follows that 
 \begin{align*}
 d_{\recht{E,opt}} &= \max\left\{d \middle| \exists c,p. \ \vvec{c\\d\\p} \in \umin H_U(\recht{EA}(\mathcal{A}))\right\},\\
 &= \max\left\{d \middle| \exists c,p. \vvec{c\\d\\p} \in \CpU(\recht{R}_T)\right\},
 \end{align*}
 where the second equation follows from Theorem \ref{thm:aux}. This is what was needed to be proven.
 \end{proof}

\thmCEDPFtree*

\begin{proof}
 We claim that $\umin \circ \pi \circ \umin = \umin \circ \pi$ as maps $\Pow(\DP) \rightarrow \Pow(\DD)$. To show this, let $X \subseteq \DP$, and let $x \in \umin(X)$ be such that $\pi(x) \in \umin(\pi(\umin(X)))$. Then $\pi(x) \in \pi(X)$; suppose $\pi(x)$ is not minimal in $\pi(X)$, and there exists an $x' \in X$ such that $\pi(x') \sqsubset \pi(x)$. Let $x'' \in \umin X$ be such that $x'' \sqsubseteq x'$. Since $\pi$ is order-preserving, we have $\pi(x'') \sqsubseteq \pi(x') \sqsubset \pi(x)$. However, this contradicts the fact that $\pi(x)$ is minimal in $\pi(\umin(X))$. Hence our assumption that $\pi(x)$ is not minimal in $\pi(X)$ does not hold, and we can conclude $\umin(\pi(\umin(X))) \subseteq \umin(\pi(X))$.

Now let $x \in X$ be such that $\pi(x) \in \umin(\pi(X))$. Let $x' \in \umin(X)$ be such that $x' \sqsubseteq x$. Since $\pi$ is order-preserving, we find $\pi(x') \sqsubseteq \pi(x)$, but since $\pi(x)$ is minimal, this is an equality; hence $\pi(x) = \pi(x') \in \pi(\umin(X))$. Furthermore, since $\pi(x)$ is minimal in $\pi(X)$, it is certainly minimal in $\pi(\umin(X))$. We conclude $\umin(\pi(\umin(X))) \supseteq \umin(\pi(X))$, which proves the claim. 

Let us now return to the proof of \ref{thm:CEDPF-tree}. Note that $H_{\infty}$ is just the identity on $\Pow(\DP)$, and that $\pi \circ \recht{EA} = \EE$. It follows that
 \begin{align*}
 \umin \pi \mathcal{C}^{\recht{PT}}_{\infty}(\R{T}) &= \umin \pi \umin H_{\infty} \recht{EA}(\mathcal{A}) \\
 &= \umin \pi \recht{EA}(\mathcal{A}) \\
 &= \umin \EE(\FA) \\
 &= \PFE. \qedhere
 \end{align*}
\end{proof}

\end{document}